\documentclass{article}

\usepackage{arxiv}

\usepackage{amsmath}
\usepackage{amsthm}
\usepackage{amssymb}

\theoremstyle{definition}
\newtheorem{definition}{Definition}
\newtheorem{proposition}[definition]{Proposition}
\newtheorem{lemma}[definition]{Lemma}

\newtheorem{theorem}[definition]{Theorem}
\newtheorem{claim}{Claim}

\usepackage[utf8]{inputenc} 
\usepackage[T1]{fontenc}    
\usepackage{hyperref}       
\usepackage{url}            
\usepackage{booktabs}       
\usepackage{amsfonts}       
\usepackage{nicefrac}       
\usepackage{microtype}      
\usepackage{cleveref}       
\usepackage{lipsum}         
\usepackage{graphicx}
\usepackage{natbib}
\usepackage{doi}

\usepackage{hyperref}
\usepackage{algorithmic}
\usepackage{graphicx}
\usepackage{textcomp}
\usepackage{xcolor}
\usepackage{tabularx}
\usepackage{tikz}
\usetikzlibrary{arrows}
\usetikzlibrary{matrix}
\usepackage{bm}
\newcommand{\setof}[2]{\{{#1} \mid {#2}\}}
\newcommand{\arrow}[1]{\stackrel{#1}{\longrightarrow}}
\newcommand{\dobarrow}[1]{\stackrel{#1}{\Longrightarrow}}

\title{A New Branching Bisimulation for Probabilistic Processes}

\newif\ifuniqueAffiliation
\uniqueAffiliationfalse

\ifuniqueAffiliation 
}
\else
\usepackage{authblk}

\newbox{\orcid}\sbox{\orcid}{\includegraphics[scale=0.06]{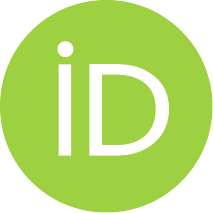}} 
\author[1,2]{%
	\href{https://orcid.org/0009-0007-7588-1964}{\usebox{\orcid}\hspace{1mm}Guo Li\thanks{\texttt{liguo2019@email.swu.edu.cn}}}%
}
\author[1]{%
	\href{https://orcid.org/0000-0002-5350-956X}{\usebox{\orcid}\hspace{1mm}Zhaokai Li\thanks{\texttt{lzk15892683467@email.swu.edu.cn}}}%
}
\author[3]{%
	\href{https://orcid.org/0000-0002-8334-8277}{\usebox{\orcid}\hspace{1mm}Xinxin Liu\thanks{\texttt{xinxin@ios.ac.cn}}}%
}
\author[1]{%
	\href{https://orcid.org/0000-0001-9771-3071}{\usebox{\orcid}\hspace{1mm}Zhiming Liu\thanks{\texttt{zhimingliu88@swu.edu.cn}}}%
}
\author[4]{%
	\href{https://orcid.org/0000-0002-4558-7867}{\usebox{\orcid}\hspace{1mm}Quan Sun\thanks{\texttt{quansun@hbmzu.edu.cn}}}%
}
\author[4]{%
	\href{https://orcid.org/0000-0001-8350-0788}{\usebox{\orcid}\hspace{1mm}Wei Zhang\thanks{\texttt{weiz@hbmzu.edu.cn}}}%
}
\affil[1]{College of Computer and Information Science, Southwest University, Chongqing, China}
\affil[2]{Hunan University of Arts and Science,Changde,China}
\affil[3]{Institute of Software, Chinese Academy of Sciences, Beijing, China}
\affil[4]{School of Mathematics and Statistics, Hubei Minzu University, Enshi, China}
\fi

\renewcommand{\shorttitle}{\textit{arXiv} Template}

\hypersetup{
pdftitle={A New Branching Bisimulation for Probabilistic Processes},
pdfsubject={cs.LO},
pdfauthor={Guo Li, Zhaokai Li},
pdfkeywords={Probabilistic processes, Probabilistic bisimilarity, Process calculi, Up-to technique, Congruence},
}

\begin{document}
\maketitle

\begin{abstract}
	We introduce a new branching bisimulation for probabilistic processes, which induces a more refined equivalence relation
	than any known equivalence that abstracts from unobservable transitions, with a rooted version that is a congruence for a 
	language of probabilistic process with the usual static as well as dynamic constructs including recursion.
\end{abstract}

\keywords{Probabilistic processes \and Probabilistic bisimilarity \and Process calculi \and Up-to technique \and Congruence}

\section{Introduction}

Process algebras, or process calculi, provide a rigorous framework for the modeling and analysis of concurrent systems~\cite{Baeten05,BaetenS14}. 
These calculi have been used in the verification of communication protocols, hardware circuits, distributed systems, security protocols, and web services, 
among others~\cite{GaravelL22}. In the development of process calculi, the idea of bisimulation, first introduced in  Robin Milner's Calculus of Communicating 
Systems (CCS)~\cite{milner89} has been a great milestone; it not only can be used to define various behavioral equivalences
but also provides an effective technique for proving process equivalences, known
as the Bisimulation Proof Method. 
 
Classical process calculi have focused on characterizing and verifying the functional aspects of concurrent systems. 
In the late 1980s, research on probabilistic extensions of process calculi emerged, which introduced quantitative analysis of concurrent system behaviors 
by augmenting non-deterministic actions of processes with probabilistic information.  These probabilistic process calculi make possible the analysis 
of non-functional aspects of concurrent processes, such as performance and reliability.

In the development of probabilistic process calculi, well-established notions of bisimulation for classic processes are adjusted to compare 
probabilistic processes, among which branching bisimulation~\cite{GlabbeekW96} is a popular one. 
Here we propose a new notion of branching bisimulation for probabilistic processes, which is more refined than any notion of bisimulation 
for probabilistic processes proposed in the literature that abstracts from internal or unobservable transitions.
We use the following example, inspired by an example in~\cite{Jonsson01}, to motivate the idea. 
\begin{figure}[ht]
\centering

\includegraphics[width=\linewidth]{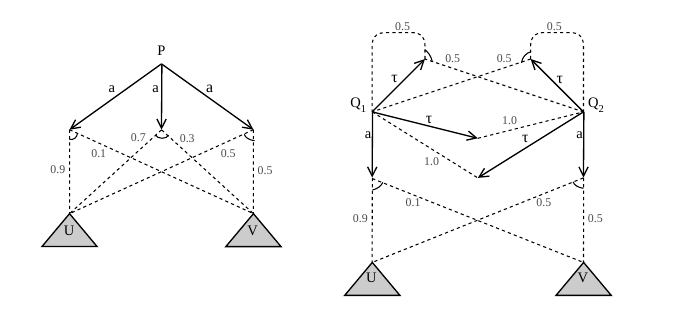}
\caption{Example 1}
\label{fig:prob_transition}
\end{figure}
The left part of Figure 1  shows a process which
starts from initial state $P$ can perform three $a$-actions. For example, the $a$-action in the middle can take the process to some state which is $U$ with probability 0.7, or is $V$ with
probability $0.3$ (assuming that $U$ and $V$ behave differently). 
The right part shows a process which starts from the state $Q_1$ can perform two internal actions labeled $\tau$, one of which 
can take the process to the state $Q_2$ with probability 1.0, the other to the states $Q_1$ and $Q_2$ both with probability 0.5, etc.
Using a simple language of probabilistic process expressions, we can specify the behavior of the left and right processes by the following equations:
$$
\begin{array}{lcl}
P&=&a.(\langle 0.9\rangle U\uplus\langle0.1\rangle V)+a.(\langle 0.7\rangle U\uplus\langle 0.3\rangle V)+a.(\langle 0.5\rangle U\uplus\langle 0.5\rangle V)\\
Q_1&=&a.(\langle 0.9\rangle U\uplus\langle0.1\rangle V)+\tau.(\langle 1.0\rangle Q_2)+\tau.(\langle 0.5\rangle Q_1\uplus\langle 0.5\rangle Q_2)\\
Q_2&= &
a.(\langle 0.5\rangle U\uplus\langle0.5\rangle V)+\tau.(\langle 1.0\rangle Q_1)+\tau.(\langle 0.5\rangle Q_1\uplus\langle 0.5\rangle Q_2)
\end{array}
$$
To the best of our knowledge, all the notions of bisimulation in the literature  which take an abstract view of the internal action 
$\tau$ would consider $P,Q_1,Q_2$ all
equivalently behaving  states. 
Take the branching bisimulation studied in~\cite{CastiglioniT20} for example, to match the above-mentioned middle $a$ transition from $P$,
even though the state $Q_1$ cannot immediately perform an $a$ transition which ends in the states $U,V$ with probability $0.7$ and $0.3$ respectively, 
$Q_1$ can choose to perform the $\tau$ transition which ends in the states $Q_1$ and $Q_2$ in probability half and half respectively, and then
$Q_1$ and $Q_2$ can both perform their respective $a$ transitions, and the resulting {\em combined} effect of the two $a$ transitions 
could lead to a state which is either $U$ or $V$ with probability $0.7$ and $0.3$ respectively, thus matching the middle $a$ transition from $P$.
Although from a probabilistic perspective  such treatment is reasonable, there are also reasons to doubt it. In particular, here we are in fact
comparing the result of a normal transition to the result obtained by combining results of two transitions of two states.
Arguably,   using the result of a single transition to compare instead of that combined from two transitions 
is  preferred here,
particularly when our model is such that actions are from individual states not from a collection of those. Thus said, there seems to be the following
immediate difficulty. One might agree that in the case of strong bisimulation where only single transitions are considered
in each matching step, in the probabilistic case it is still both possible either
\begin{itemize}
\item to insist on comparing single transitions from individual states, which leads to the notion of {\em strong bisimulation} for probabilistic processes, or
\item  to allow comparing combined transitions (from a single state) with single transitions, which leads to the notion of {\em strong probabilistic bisimulation} for probabilistic 
processes.
\end{itemize}
Now, how 
could it be possible to only compare the results of single transition in the situation where a sequence of $\tau$ transitions can 
be allowed which obviously leads to 
a set of possible states.  In this paper we achieve this by generalizing the idea of inert transitions ($\tau$ transitions
within the same equivalence class)  from the study of classical branching bisimulation,
leading to a new notion of branching bisimulation under which the above states $P,Q_1,Q_2$ are pairwise different. For the moment we shall refrain from
discussing advantages and disadvantages of different notions, since a good virtue of the study of process calculi is a wealth of different and related
notions, each of which could find its place in applications. We also establish that with the addition of  the usual rootedness condition, 
our new notion of  branching bisimulation is a 
congruence for a language of probabilistic processes, extended from CCS with the usual static as well as  dynamic 
constructs including recursion.

The rest of the paper is organized as follows. Section~\ref{sec2} presents the extended language for probabilistic processes and its operational semantics. 
Section~\ref{sec4} introduces branching bisimilarity and proves that it is an equivalence relation. 
Section~\ref{sec5} first proves that branching bisimilarity is a congruence for parallel composition and other operators, and then introduces the notion of branching equality, proving that branching equality is a congruence for all operators. 
Section~\ref{sec8} provides a brief overview of some related work. Finally, Section~\ref{sec9} concludes.  

\section{A Language for Probabilistic Processes}\label{sec2}

We extend the process description language of Milner's
CCS \cite{milner89} to describe the probabilistic processes.
We assume an infinite set $\mathcal{A}$ of \emph{names} and use $a, b, c,\ldots$ to range over
$\mathcal{A}$. We denote by $\overline{\mathcal{A}}$  the set of \emph{co-names,}
i.e. $\overline{\mathcal{A}}
=\{\overline a \mid a\in\mathcal{A}\}$. Then we define $\mathcal{L}=\mathcal{A}\cup\overline{\mathcal{A}}$,
and call $\mathcal{L}$ the set of \emph{labels.}
We shall use $l$, $l'$ to range over $\mathcal{L}$.
Define $\mathit{Act}=\mathcal{L}\cup\{\tau\}$ to be the set of actions, where $\tau$ is the invisible action.
We shall use $\alpha$, $\beta$ to range over $\mathit{Act}$,
and $K$, $L$ to stand for subsets of $\mathcal{L}$.
For ease of notation we also treat $\overline{\,\cdot\,\vphantom{\alpha}}$
as a one-to-one function on $\mathit{Act}$ such that $\overline{\tau}=\tau$ and
$\overline{\overline{\alpha}}=\alpha$.
A \emph{relabelling function} $f$ is a function from $\mathcal{L}$ to $\mathcal{L}$ such that $f(\overline l)=\overline{f(l)}$,
and for ease of notation we extend $f$ to $\mathit{Act}$ by letting $f(\tau)=\tau$.
We also assume a set ${\mathcal K}$ of {\em constants} and
a set ${\mathcal X}$ of {\em process variables}; we use $A,B,C,D,\dots$ with possible
subscripts to range over ${\mathcal K}$
and $X,Y,\ldots$ with possible subscripts to range over ${\mathcal X}$.

We shall now define ${\mathcal E}$, the set of {\em process expressions}, and let $E,F,\ldots$ rage over ${\mathcal E}$.
${\mathcal E}$ is the smallest set
given by the following BNF grammar:
$$\begin{array}{l}
	\begin{array}{ccc|c|c|c|c|c|c}
		E&::=&\alpha.\biguplus_{j\in J}\langle p_{j}\rangle E_{j}&\sum_{i\in I}E_{i}&E\mathbin{|}E &E[f]& E\backslash L& X &A\end{array}
\end{array}
$$

	where $I$ and $J$ are finite indexing sets, $p_j\in(0,1]$ for $j\in J$ and $\sum_{j\in J}p_{j}=1$, $\alpha\in\mathit{Act}$, $L\subseteq\mathcal{L}$,
	and $f$ is a relabelling function.
	Here $\alpha.\biguplus_{j\in J}\langle p_{j}\rangle E_{j}$ is a prefixed expression which can be understood as follows: it can
	perform an action $\alpha$ and afterwards with probability $p_j$ it will behave as $E_j$; $\Sigma_{i\in I}E_i$
	is the summation which is capable of the actions by every $E_i$ for $i\in I$, in particular
	we define $\mathbf{0}=\sum_{i\in \emptyset}\alpha_{i}.E_{i}$ when $I=\emptyset$, i.e.
	$\mathbf{0}$ is the null expression, which is not capable of any action, and moreover when $I=\{1,2\}$
	we will often write $E_1+E_2$ for $\Sigma_{i\in I}E_i$;
	$E_{1}\mathbin{|}E_{2}$ is parallel composition, in which $E_{1}$ and $E_{2}$ can proceed concurrently;
	$E[f]$ is a relabelling expression, which adjusts the labels of actions according to $f$;
	$E\backslash L$ is a restriction expression, which forbids actions with labels in $L \cup \overline L$;
	$X$ is a process variable and $A$ is a constant process. We will use $\equiv$ for syntactic equivalence of expressions.
	We call expressions that without occurrence of any variable processes,
	and we write ${\mathcal P}$ for the set of processes, and use $P,Q,O,U,V\ldots$ to rage over ${\mathcal P}$.
	We assume that every constant process $A$ is associated with a defining equation of the form
	$A\stackrel{{\rm def}}{=}P$, where $A$ can again occur in process $P$ to facilitate recursion.
	
	The operational semantics of processes is given by a transition relation
	$$\mathord{\arrow{}}\subseteq\mathcal{P}\times \mathit{Act}\times{\sf Disc}(\mathcal P),$$
	where
	${\sf Disc}(\mathcal P)$ is the set of discrete distribution functions over ${\mathcal P}$,
	and we will use $\Delta,\Theta,\Omega,\ldots$ to range over ${\sf Disc}({\mathcal P})$.
	${\sf Disc}({\mathcal P})$ can be formally defined as follows.
	\begin{definition} ${\sf Disc}({\mathcal P})$ is the smallest set generated by the
		following rules:
		\begin{enumerate}
			\item For every $P\in{\mathcal P}$, $\delta(P)\in{\sf Disc}({\mathcal P})$
			is the {\em point distribution} which assigns probability $1$
			to $P$ and $0$ to all other $P'\in{\mathcal P}$.
			\item If $\Delta_1,\ldots,\Delta_n\in{\sf Disc}({\mathcal P})$ and $p_1,\ldots,p_n\in(0,1]$ with
			$\Sigma_{i=1}^np_i=1$, let $\Sigma_{i=1}^np_i\Delta_i$ be the function
			s.t. $(\Sigma_{i=1}^np_i\Delta_i)(P)=\Sigma_{i=1}^np_i\Delta_i(P)$
			for every $P\in{\mathcal P}$, then $\Sigma_{i=1}^np_i\Delta_i\in{\sf Disc}({\mathcal P})$.
			We also write $\Sigma_{i=1}^2p_i\Delta_i$ as $p_1\Delta_1+p_2\Delta_2$.
		\end{enumerate}
	\end{definition}
	
	To work with the operational semantics of probabilistic processes, we
	only need the following simple facts about ${\sf Disc}({\mathcal P})$ of which the direct proofs are omitted.
	\begin{proposition} For all $\Delta\in{\sf Disc}(\mathcal P)$ the following hold:
		\begin{enumerate}
			\item $\Delta$ is a distribution function over $\mathcal{P}$, i.e.
			$\Delta(P)\in[0,1]$ for
			$P\in{\mathcal P}$, and $\Sigma_{P\in{\mathcal P}}\Delta(P)=1$.
			\item the set $\setof{P \in \mathcal{P}}{\Delta(P) > 0}$, called the {\em support of} $\Delta$ and
			denoted $\lceil\Delta\rceil$, is a finite set.
		\end{enumerate}
	\end{proposition}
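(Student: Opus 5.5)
The plan is structural induction over the inductive generation of ${\sf Disc}({\mathcal P})$ in the preceding Definition. The two claims are proved together, since the induction step for (1) is easiest when every distribution involved already has finite support. The induction hypothesis is that, for every $\Delta$ generated so far, $\Delta(P)\in[0,1]$ for all $P$, $\lceil\Delta\rceil$ is finite, and $\Sigma_{P\in{\mathcal P}}\Delta(P)=1$. The sum is well defined because only finitely many terms are nonzero.

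For the base case $\delta(P)$, everything is immediate. The value is $1$ at $P$ and $0$ elsewhere, so every value lies in $[0,1]$. The support is $\{P\}$, which is finite. The sum equals $1$.

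For the inductive case $\Delta=\Sigma_{i=1}^n p_i\Delta_i$, first handle the support. Each $p_i>0$ and each $\Delta_i(P)\ge 0$, so $\Delta(P)>0$ holds iff $\Delta_i(P)>0$ for some $i$. Hence $\lceil\Delta\rceil=\bigcup_{i=1}^n\lceil\Delta_i\rceil$, a finite union of finite sets. Next the bounds: $\Delta(P)\ge 0$ because it is a sum of nonnegative terms, and $\Delta(P)\le\Sigma_i p_i\cdot 1=1$.

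For the total mass, write $S=\bigcup_i\lceil\Delta_i\rceil$. Both $\Delta$ and every $\Delta_i$ vanish outside $S$, so every sum can be taken over the finite set $S$. Then
$\Sigma_{P\in S}\Sigma_i p_i\Delta_i(P)=\Sigma_i p_i\Sigma_{P\in S}\Delta_i(P)=\Sigma_i p_i=1$,
using the hypothesis that $\Delta_i$ sums to $1$ over its support, which is contained in $S$.

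The only point needing care, and the main obstacle, is this interchange of summation. It is justified by reducing to the finite index set $S$, which is why finiteness of support is proved in the same induction rather than afterwards. Because ${\sf Disc}({\mathcal P})$ is defined as the smallest set closed under the two rules, the two cases above cover all of its elements, and the induction establishes both parts of the proposition.
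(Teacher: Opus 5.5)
Your structural induction over the two generating rules of ${\sf Disc}(\mathcal P)$ is correct. Carrying finiteness of support along in the hypothesis, so that the interchange of summations is over a finite set, makes it exactly the direct argument the paper has in mind when it says the proofs are omitted: the base case, the identity $\lceil\Sigma_i p_i\Delta_i\rceil=\bigcup_i\lceil\Delta_i\rceil$ (using $p_i>0$), the bounds, and the total-mass computation together establish both parts.
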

	
	We can now define the operational semantics for the processes.
	\begin{definition}\label{Opsemantics}
		The transition relation
		$$\mathord{\arrow{}}\subseteq\mathcal{P}\times \mathit{Act}\times
		{\sf Disc}(\mathcal{P})$$
		is the least relation generated by the axiom and rules bellow, following the tradition we will write
		$P\arrow{\alpha}\Delta$ instead of $(P,\alpha,\Delta)\in\arrow{}$:
$$
\begin{array}{l}
	\begin{array}{llll}
		\textbf{Act} & \alpha.\biguplus_{j\in J}\langle p_{j}\rangle P_{j}\arrow{\alpha}\Sigma_{j\in J}p_j\delta(P_j)\mbox{}
		\ \ \ \ \ \ \ \ \ &
		\textbf{Sum}_j&\frac{\displaystyle P_j\arrow{\alpha}\Delta_j}{
			\displaystyle \Sigma_{i\in I}P_i\arrow{\alpha}\Delta_j}{\ \ (j\in I)}
	\end{array}\\
	
	\begin{array}{llll}
		\textbf{Com}_1&\frac{\displaystyle P\arrow{\alpha}\Delta}{\displaystyle P|Q\arrow{\alpha}
			\Sigma_{O\in\lceil\Delta\rceil}\Delta(O)\delta({O|Q})}{}\ \ \ \ \ \ \
		&\mbox{}\ \
		\textbf{Com}_2&\frac{\displaystyle Q\arrow{\alpha}\Delta}{\displaystyle P|Q\arrow{\alpha}
			\Sigma_{O\in\lceil\Delta\rceil}\Delta(O)\delta({P|O})}{}
	\end{array}\\
 
	\begin{array}{rl}\textbf{Com}_3&\frac{\displaystyle P\arrow{l}\Delta\ \
			Q\arrow{\bar l}\Theta}{\displaystyle P|Q\arrow{\tau}
			\Sigma_{(U,V)\in\lceil\Delta\rceil\times\lceil\Theta\rceil}(\Delta(U)\cdot\Theta(V))\delta({U|V})}{}
	\end{array}\\
	\begin{array}{llll}\textbf{Rel} & \frac{\displaystyle P\arrow{\alpha}\Delta}{\displaystyle P[f]\arrow{f(\alpha)}
			\Sigma_{O\in\lceil\Delta\rceil}\Delta(O)\delta({O[f]})}{}\ \ \ \ \ \ 
		&\mbox{}\;\;\;\; \
		\textbf{Con} & \frac{\displaystyle P\arrow{\alpha}\Delta}{\displaystyle A\arrow{\alpha}\Delta}
		{\ \ (A\stackrel{{\rm def}}{=}P)}
	\end{array}\\
	\begin{array}{rl}
		\textbf{Res}& \frac{\displaystyle P\arrow{\alpha}\Delta}{\displaystyle P\backslash L\arrow{\alpha}
			\Sigma_{O\in\lceil\Delta\rceil}\Delta(O)\delta({O\backslash L})}
		{\ \ (\alpha,\overline\alpha\notin L)}\end{array}
\end{array}
$$
	\end{definition}
	
	It is not difficult to see that the relation $\arrow{}$ is well-defined. For example, let us
	look at the rule ${\bf Com}_1$, since $\Delta\in{\sf Disc}(\mathcal{P})$, thus
	$\Sigma_{O\in\lceil\Delta\rceil}\Delta(O)=1$, and $\Sigma_{O\in\lceil\Delta\rceil}\Delta(O)\delta({O|Q})
	\in{\sf Disc}(\mathcal{P})$. \\
	Similarly, for the more complex
	${\bf Com}_3$, since $\Delta,\Theta\in{\sf Disc}(\mathcal{P})$ we have
	$$\begin{array}{l}
		\Sigma_{(U,V)\in\lceil\Delta\rceil\times\lceil\Theta\rceil}\Delta(U)\cdot\Delta(V)=
		(\Sigma_{U\in\lceil\Delta\rceil}\Delta(U))\cdot
		(\Sigma_{V\in\lceil\Theta\rceil}\Theta(V))=1,
	\end{array}$$
	thus $\Sigma_{(U,V)\in\lceil\Delta\rceil\times\lceil\Theta\rceil}(\Delta(U)\cdot\Delta(V))\delta({U|V})
	\in{\sf Disc}(\mathcal{P})$.

\section{Branching Bisimilarity}\label{sec4}

In this section we first prepare necessary preliminaries by presenting strong bisimulation for probabilistic processes.
Before defining what is a bisimulation, we need to lift a binary relation on $\mathcal{P}$ to a binary relation
on ${\sf Disc}({\mathcal P})$. Extensive research has already been conducted in this area, as seen in references ~\cite{LarsenS89,DengGHM09,Deng18,Jonsson0L01,CastiglioniT20}, among others. In this paper, we directly adopt the definition of relation lifting from reference~\cite{DengGHM09,Deng18}.

\begin{definition}\label{lifting-relation} Let $R\subseteq{\mathcal P}\times{\mathcal P}$ be a binary relation on
the set of processes. The lifting of $R$, written $R^\dagger$, is the smallest binary relation on
${\sf Disc}({\mathcal P})$ that satisfies:
\begin{enumerate}
\item $PR\,Q$ implies $\delta(P)R^\dagger\delta(Q)$;
\item $\Delta_iR^\dagger\Theta_i$ for $i\in I$ implies
$(\Sigma_{i\in I}p_i\Delta_i)R^\dagger(\Sigma_{i\in I}p_i\Theta_i)$,
where $I$ is an index set s.t.
$\Sigma_{i\in I}p_i=1$.
\end{enumerate}
\end{definition}

The following proposition is often used as an alternative definition of lifting. It is used in many proofs~\cite{BaierEM00,Hennessy12,DengH13}.

\begin{proposition}\label{Prop-DefofLifting}
Let $\Delta,\Theta\in{\sf Disc}(\mathcal{P})$, and $R\subseteq{\mathcal P}\times{\mathcal P}$.
Then the following are equivalent conditions:
\begin{enumerate}
\item $\Delta\,R^\dagger\,\Theta$.
\item there exist $P_1,\ldots,P_n,Q_1,\ldots,Q_n\in\mathcal{P}$ and
$p_1,\ldots,p_n\in(0,1]$ such that $\Sigma_{i=1}^np_i=1$ and $P_i\,R\,Q_i$ for $i=1,\ldots,n$ and
$\Delta,\Theta$ can be decomposed as follows:
\begin{enumerate}
\item $\Delta=\Sigma_{i=1}^np_i\delta(P_i)$;
\item $\Theta=\Sigma_{i=1}^np_i\delta(Q_i)$.
\end{enumerate}
\item there exists a weight function $w:\mathcal{P}\times\mathcal{P}\rightarrow [0,1]$ such that
whenever $w(P,Q)>0$ then $PRQ$, and moreover
\begin{enumerate}

\item for all $P\in\mathcal{P}$ it holds that $\Sigma_{Q\in\mathcal{P}}w(P,Q)=\Delta(P)$;
\item for all $Q\in\mathcal{P}$ it holds that $\Sigma_{P\in\mathcal{P}}w(P,Q)=\Delta(Q)$.
\end{enumerate}
\end{enumerate}
\end{proposition}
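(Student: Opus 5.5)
I will prove the cycle $1\Rightarrow 2\Rightarrow 3\Rightarrow 2\Rightarrow 1$. I read clause 3(b) as $\Sigma_{P\in\mathcal{P}}w(P,Q)=\Theta(Q)$: the stated $\Delta(Q)$ looks like a typo, since with it the condition would not even mention $\Theta$. Throughout I use two facts. Every $\Delta\in{\sf Disc}(\mathcal P)$ has finite support and total mass $1$, by the earlier proposition on ${\sf Disc}(\mathcal P)$. The combination $\Sigma_i p_i\Delta_i$ is evaluated pointwise.

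\emph{Step $2\Rightarrow 1$.} This is immediate from Definition~\ref{lifting-relation}. Clause 1 of the definition gives $\delta(P_i)\,R^\dagger\,\delta(Q_i)$ for each $i$. Clause 2, applied with the weights $p_1,\ldots,p_n$, then gives $\Sigma_i p_i\delta(P_i)\;R^\dagger\;\Sigma_i p_i\delta(Q_i)$, that is, $\Delta\,R^\dagger\,\Theta$.

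\emph{Step $1\Rightarrow 2$.} Since $R^\dagger$ is the least relation closed under the two clauses, I argue by induction on the derivation of $\Delta\,R^\dagger\,\Theta$; equivalently, I show that the set of pairs satisfying condition 2 is closed under both clauses.
\begin{itemize}
\item For the base clause, take $n=1$ and $p_1=1$.
\item For the combination clause, suppose $\Delta=\Sigma_{i\in I}p_i\Delta_i$ and $\Theta=\Sigma_{i\in I}p_i\Theta_i$. By the induction hypothesis each pair has a decomposition $\Delta_i=\Sigma_{k}q_{ik}\delta(P_{ik})$ and $\Theta_i=\Sigma_k q_{ik}\delta(Q_{ik})$ with $P_{ik}\,R\,Q_{ik}$.
\item Flatten these into one family indexed by pairs $(i,k)$ with weights $p_iq_{ik}$. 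These weights sum to $\Sigma_ip_i\Sigma_kq_{ik}=1$, and linearity of the pointwise combination gives the two required decompositions.
\item Discard any index with $p_i=0$ so that all weights stay in $(0,1]$.
\end{itemize}
This flattening is the only point needing care, and it is routine bookkeeping.

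\emph{Step $2\Rightarrow 3$.} Define $w(P,Q)=\Sigma\{p_i\mid P_i\equiv P,\ Q_i\equiv Q\}$. If $w(P,Q)>0$, then some $i$ has $P_i\equiv P$ and $Q_i\equiv Q$, so $P\,R\,Q$. Summing over $Q$ collects every $i$ with $P_i\equiv P$, which gives $\Delta(P)$. Symmetrically, summing over $P$ gives $\Theta(Q)$.

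\emph{Step $3\Rightarrow 2$.} This is the step I expect to require the most attention, because it needs a finiteness argument. If $w(P,Q)>0$, then $w(P,Q)\le\Delta(P)$ forces $P\in\lceil\Delta\rceil$, and likewise $Q\in\lceil\Theta\rceil$. So the pairs with positive weight lie in the finite set $\lceil\Delta\rceil\times\lceil\Theta\rceil$. Enumerate them as $(P_1,Q_1),\ldots,(P_n,Q_n)$ and set $p_i=w(P_i,Q_i)\in(0,1]$. Then $P_i\,R\,Q_i$ holds for each $i$. Moreover, $\Sigma_ip_i=\Sigma_P\Delta(P)=1$, and evaluating $\Sigma_ip_i\delta(P_i)$ at any $P$ gives $\Sigma_Qw(P,Q)=\Delta(P)$. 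The same computation yields $\Theta=\Sigma_ip_i\delta(Q_i)$. This closes the cycle.
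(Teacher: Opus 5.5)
Your proof is correct and is the standard argument. The paper gives no proof of its own and defers to \cite{deng15}, where $1\Rightarrow2$ is likewise shown by induction on the derivation of $\Delta\,R^\dagger\,\Theta$ with flattening of nested convex combinations, $2\Rightarrow1$ follows from the two closure clauses, and the weight-function characterisation uses exactly your per-pair aggregation and finite-support arguments (your flattening tacitly takes the index set in the linearity clause to be finite, which the paper leaves implicit but which matches how ${\sf Disc}(\mathcal P)$ is built). Your reading of 3(b) as $\Sigma_{P\in\mathcal{P}}w(P,Q)=\Theta(Q)$ is the intended one: the $\Delta(Q)$ in the statement is a typo, as the paper's later uses of weight functions confirm.
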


The proof of this proposition and that of the following proposition
about lifting can be found in \cite{deng15}.


\begin{proposition}\label{Prolift}
	Let $R_1,R_2\subseteq\mathcal{P}\times\mathcal{P}.$
\begin{enumerate}
\item If $R_1\subseteq R_2$ then $R_1^\dagger\subseteq R_2^\dagger$.
\item $(R_1\cdot R_2)^\dagger=R_1^\dagger\cdot R_2^\dagger$.
\end{enumerate}
\end{proposition}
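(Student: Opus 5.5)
The plan is to work throughout with the decomposition characterisation of lifting, that is, the equivalence of conditions 1 and 2 in Proposition~\ref{Prop-DefofLifting}. Both parts then reduce to statements about finite families of weighted pairs.

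\textbf{Part 1 (monotonicity).} I would argue by minimality of $R_1^\dagger$. The relation $R_2^\dagger$ satisfies both closure rules of Definition~\ref{lifting-relation} with respect to $R_1$.
\begin{itemize}
\item If $P R_1 Q$, then $P R_2 Q$, so $\delta(P) R_2^\dagger \delta(Q)$.
\item $R_2^\dagger$ is closed under convex combinations by its own definition.
\end{itemize}
Since $R_1^\dagger$ is the smallest relation with these closure properties, $R_1^\dagger\subseteq R_2^\dagger$.

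\textbf{Part 2, inclusion $\subseteq$.} Suppose $\Delta (R_1\cdot R_2)^\dagger \Theta$. By condition 2 of Proposition~\ref{Prop-DefofLifting} we can write
\[
\Delta=\Sigma_i p_i\delta(P_i),\qquad \Theta=\Sigma_i p_i\delta(Q_i),\qquad P_i (R_1\cdot R_2) Q_i .
\]
For each $i$ choose a witness $O_i$ with $P_i R_1 O_i R_2 Q_i$, and set $\Omega=\Sigma_i p_i\delta(O_i)$. Using the same index family and decomposition, $\Delta R_1^\dagger\Omega$ and $\Omega R_2^\dagger\Theta$.

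\textbf{Part 2, inclusion $\supseteq$.} This is the main obstacle. Suppose $\Delta R_1^\dagger \Omega$ and $\Omega R_2^\dagger \Theta$. The two decompositions of $\Omega$ need not be compatible:
\[
\Omega=\Sigma_i p_i\delta(O_i)=\Sigma_j q_j\delta(O'_j),
\]
with $P_i R_1 O_i$ and $O'_j R_2 Q_j$. I would handle this by a common refinement. For each process $O$ in $\lceil\Omega\rceil$, split the weight $\Omega(O)$ between the index sets
\[
I_O=\{i\mid O_i\equiv O\} \quad\text{and}\quad J_O=\{j\mid O'_j\equiv O\}.
\]
Set
\[
r_{ij}=\frac{p_i q_j}{\Omega(O)} \quad\text{for } i\in I_O,\ j\in J_O ,
\]
and $r_{ij}=0$ otherwise. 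The marginals are then correct, since $\Sigma_{j\in J_O} r_{ij}=p_i$ because $\Sigma_{j\in J_O}q_j=\Omega(O)$, and symmetrically $\Sigma_{i\in I_O} r_{ij}=q_j$.

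Now index a new decomposition by the pairs $(i,j)$ with $r_{ij}>0$:
\[
\Delta=\Sigma r_{ij}\delta(P_i),\qquad \Theta=\Sigma r_{ij}\delta(Q_j).
\]
For every such pair, $O_i\equiv O'_j$, so $P_i R_1 O_i \equiv O'_j R_2 Q_j$ and therefore $P_i (R_1\cdot R_2) Q_j$. Applying Proposition~\ref{Prop-DefofLifting} once more gives $\Delta (R_1\cdot R_2)^\dagger\Theta$.

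Alternatively, the same gluing step could be done with the weight-function formulation (condition 3). In that case one would take
\[
w(P,Q)=\Sigma_O \frac{w_1(P,O)\,w_2(O,Q)}{\Omega(O)},
\]
which is the standard coupling-composition argument. The only care needed in either version is to divide only by $\Omega(O)>0$, that is, to range over the support of $\Omega$.
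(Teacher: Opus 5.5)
Your proof is correct. The paper gives no argument for this proposition and only refers to \cite{deng15}, so your write-up is a self-contained substitute rather than a reconstruction of something in the text.

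Part 1, by minimality of $R_1^\dagger$, is right: $R_2^\dagger$ satisfies both generating rules with respect to $R_1$. The $\subseteq$ half of Part 2, via chosen witnesses $O_i$, is the routine step.

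For the $\supseteq$ half, your product coupling $r_{ij}=p_iq_j/\Omega(O)$ over the blocks $I_O\times J_O$ is the explicit form of the gluing step. A common alternative organises the same step as a \emph{decomposability} lemma for lifted relations: if $(\Sigma_i p_i\Delta_i)\,R^\dagger\,\Theta$ then $\Theta=\Sigma_i p_i\Theta_i$ with $\Delta_i\,R^\dagger\,\Theta_i$. Applied to $\Omega=\Sigma_i p_i\delta(O_i)$ and $R_2^\dagger$, it gives $\delta(O_i)\,R_2^\dagger\,\Theta_i$. Since $P_i\,R_1\,O_i$, this yields $\delta(P_i)\,(R_1\cdot R_2)^\dagger\,\Theta_i$, and linearity finishes.

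Proving that lemma needs the same proportional splitting you do, so the mathematical content is identical. The lemma-based version buys a reusable tool. Yours uses nothing beyond the equivalence of items 1 and 2 of Proposition~\ref{Prop-DefofLifting}, and it makes the marginal check fully explicit.

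Two small points for a final write-up. First, say explicitly that each index $i$ lies in some $I_O$ with $O\in\lceil\Omega\rceil$, because $p_i>0$. This is what makes $\Sigma_j r_{ij}=p_i$ hold for every $i$, not just for indices you have already placed in a block. Second, if you use the weight-function variant, note that item 3(b) of Proposition~\ref{Prop-DefofLifting} as printed has $\Delta(Q)$ where $\Theta(Q)$ is meant. Your gluing formula $w(P,Q)=\Sigma_O w_1(P,O)w_2(O,Q)/\Omega(O)$ needs the corrected marginals.
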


The following  definition of strong bisimulation is a common
version of strong bisimulation for probabilistic processes, for example~\cite{deng15,Jonsson01}.
This version of bisimulation is different from a weaker version of {\em probabilistic bisimulation} in
that it matches a transition step by an individual transition instead of a {\em combined transition}
which is a combination of several transitions.

\begin{definition}\label{Strongbisim}
A symmetric  binary relation $R\subseteq\mathcal{P}\times\mathcal{P}$ is
a {\em strong bisimulation} if $P{R}\,Q$ implies that whenever
$P\arrow{\alpha}\Delta$
for some $\alpha\in{Act}$ and $\Delta\in{\sf Disc}(\mathcal{P})$, then there exists
$\Theta\in{\sf Disc}(\mathcal P)$ such that $Q\arrow{\alpha}\Theta$ and $\Delta\, R^\dagger\Theta$.

We say that two processes $P,Q\in\mathcal P$ are strong bisimilar, written $P\sim Q$,
if there is a strong bisimulation $R$ such that $PR\,Q$.
\end{definition}

The relation $\sim$ is called strong bisimilarity, and it is the union of all strong bisimulation relations. 
It is standard to  prove that strong bisimilarity is an equivalence relation, and it is also the largest strong bisimulation.

Strong bisimilarity  is
normally considered of only theoretical interest. It is too restrictive to be useful in actual
verification, as it lacks the ability to abstract from internal computations. 
As we have said in the introduction, here we will propose a new notion of branching bisimulation 
for probabilistic processes, which generalizing the original notion for nondeterministic processes~\cite{GlabbeekW96}.
First we need generalize the weak transition of the nondeterministic processes in the following definition.

\begin{definition}
The weak transition
$\dobarrow{}\subseteq{\mathcal P}\times 2^{\mathcal P}$ is defined
by the following rules:
$$
	P \dobarrow{}\{P\} \ \ \ \ \ \ \ \ 
	{P \dobarrow{} \lceil\Delta\rceil}\ (P \arrow{\tau} \Delta)\ \ \ \ \ \ 
	\frac{P\dobarrow{}\{P_1,\ldots,P_n\}, P_1\dobarrow{}S_1,\ldots,P_n\dobarrow{}S_n}{P\dobarrow{}\bigcup_{i=1}^nS_i}
	$$

For ease of notation, we also introduce a transition relation
$\arrow{(\alpha)}\subseteq\mathcal{P}\times{\sf Disc}(\mathcal{P})$ for each $\alpha\in{Act}$, defined by the following two rules:
$$
P\arrow{(\tau)}\delta(P)\ \ \ \ \ \ \ \ \ \ \ P\arrow{(\alpha)}\Delta\  (P\arrow{\alpha}\Delta)
$$
\end{definition}

Here, from a starting state a weak transition end with a set of states.
Basically a weak transition consists of a sequence of  $\tau$ transitions while removing the probability information, or
it could be understood as a sequence of $\tau$ transitions that end in a set of states with probability one.
The non-probabilistic version of transition $\arrow{(\alpha)}$ first appeared in \cite{Basten96}. Next, we prove a lemma about the weak transition relation.






We now define our branching bisimulation for probabilistic processes. The definition is an adaptation of
{\em semi-branching bisimulation} in \cite{Basten96}, which is slightly more general than the original
definition of branching bisimulation of \cite{GlabbeekW96}. Although the two definitions arrive at the
same final equivalence relation, but only the former is closed under relation composition, and this property
is very helpful in proving transitivity of the final relation.

\begin{definition}\label{BranchingBisimulation}
%

Let $R\subseteq\mathcal{P}\times\mathcal{P}$, for two sets $S,T$ of processes we write 
$S\sqsupseteq_RT$ if whenever $Q\in T$ there is $P\in S$ such that $(P,Q)\in R$
  
For $R\subseteq\mathcal{P}\times\mathcal{P}$, define the relation $\mathcal{B}(R)\subseteq\mathcal{P}\times\mathcal{P}$
such that for $P,Q\in\mathcal{P}$, $(P,Q)\in\mathcal{B}(R)$ if
the following hold:
whenever
$P\arrow{\alpha}\Delta$ for some $\alpha\in{Act}$ and $\Delta\in{\sf Disc}(\mathcal P)$,
then $Q\dobarrow{}T$ for some
$T\subseteq\mathcal{P}$ such that $\{P\}\sqsupseteq_R T$, and moreover 
for every $Q'\in T$ there exists $\Theta'\in{\sf Disc}(\mathcal P)$ such that
$Q'\arrow{(\alpha)}\Theta'$ and
$\Delta R^\dagger\Theta'$.
%

A symmetric  binary relation $R\subseteq\mathcal{P}\times\mathcal{P}$ is a
{\em branching bisimulation} if $R\subseteq\mathcal{B}(R)$.

We say that two processes $P,Q$ are branching bisimilar, written $P\approx_{b}Q$,
if there is a branching bisimulation $R$ such that $PR\,Q$.
\end{definition}

Here we briefly discuss the intuition behind this definition. If two processes $P$ and $Q$ are to be branching bisimilar and
$P$ can do an $\alpha$ transition to a distribution $\Delta$, then $Q$ is required to match this by a sequence of $\tau$ transitions
which with probability one end in a set of states $T$, such that every single state $Q'$ in $T$ in fact does not change equivalence 
class, and at the same time can produce a single transition to match $\alpha$. This way of matching the $\alpha$ transition 
follows almost exactly the branching structure of the two compared parts, which is more in line with the spirit of the
original branching bisimulation.

In Definition~\ref{BranchingBisimulation}, inspired by literature\cite{ErkensRL20,SunJLZ23}, the relation $\mathcal{B}(R)$ is first introduced as a branching simulation relation. The relation $\approx_{b}$ is called branching bisimilarity, and it is the union of all probabilistic branching bisimulation relations.
Next, the relationship between probabilistic strong bisimulation and probabilistic branching bisimulation is elaborated through Theorem~\ref{Th-StrongVSbranching}.

\begin{theorem}\label{Th-StrongVSbranching} Every strong bisimulation is a branching bisimulation. Strong bisimilarity is
stronger than branching bisimilarity, i.e. $\sim\ \subseteq\ \approx_b$.
\end{theorem}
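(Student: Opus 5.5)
The plan is to show directly that any strong bisimulation $R$ satisfies $R\subseteq\mathcal{B}(R)$. Since $R$ is symmetric by Definition~\ref{Strongbisim}, this makes $R$ a branching bisimulation in the sense of Definition~\ref{BranchingBisimulation}. The inclusion $\sim\ \subseteq\ \approx_b$ then follows at once: if $P\sim Q$, some strong bisimulation $R$ contains $(P,Q)$, and this $R$ is also a branching bisimulation, so $P\approx_b Q$.

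To verify $R\subseteq\mathcal{B}(R)$, take $(P,Q)\in R$ and a transition $P\arrow{\alpha}\Delta$. Strong bisimulation provides $\Theta$ with $Q\arrow{\alpha}\Theta$ and $\Delta R^\dagger\Theta$. The matching weak transition will be the trivial one. By the first rule defining $\dobarrow{}$, we have $Q\dobarrow{}\{Q\}$, so we choose $T=\{Q\}$. Two conditions must then be checked.

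First, $\{P\}\sqsupseteq_R\{Q\}$. This requires, for the unique element $Q$ of $T$, a process in $\{P\}$ related to it by $R$, and $(P,Q)\in R$ supplies exactly that.

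Second, for every $Q'\in T$, that is for $Q'=Q$, we need some $\Theta'$ with $Q\arrow{(\alpha)}\Theta'$ and $\Delta R^\dagger\Theta'$. We take $\Theta'=\Theta$. The second rule defining $\arrow{(\alpha)}$ yields $Q\arrow{(\alpha)}\Theta$ from $Q\arrow{\alpha}\Theta$, and this holds also when $\alpha=\tau$. The lifting condition $\Delta R^\dagger\Theta$ is given directly.

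There is no genuine obstacle in this argument. The only point needing care is the case $\alpha=\tau$: one must use the rule $P\arrow{(\alpha)}\Delta$ derived from $P\arrow{\alpha}\Delta$, and not the idle rule $Q\arrow{(\tau)}\delta(Q)$, because the lifting fact we have concerns $\Theta$ and not $\delta(Q)$. One should also note that the definition of $\approx_b$ quantifies over arbitrary branching bisimulations, so the inclusion of relations transfers to the bisimilarities without any appeal to largest fixed points.
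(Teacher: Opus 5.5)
Your proof is correct: choosing the trivial weak transition $Q\dobarrow{}\{Q\}$ and matching $P\arrow{\alpha}\Delta$ with $Q\arrow{(\alpha)}\Theta$ (taken from the strong bisimulation step) establishes $R\subseteq\mathcal{B}(R)$ directly. The inclusion $\sim\ \subseteq\ \approx_b$ then follows because both bisimilarities are defined as unions over witnessing relations; the paper states the theorem without proof, and yours is exactly the routine verification it leaves implicit.
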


An important task of this section is to establish that the probabilistic branching bisimilarity $\approx_b$
is an equivalence relation. We begin by stating several preparatory propositions and lemmas, before culminating in the final result.

\begin{proposition}\label{monotonic} $\mathcal{B}$ is monotonic, i.e. for $R_1,R_2\subseteq\mathcal{P}\times\mathcal{P}$,
if $R_1\subseteq R_2$ then $\mathcal{B}(R_1)\subseteq\mathcal{B}(R_2)$.
\end{proposition}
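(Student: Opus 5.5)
The plan is to unfold Definition~\ref{BranchingBisimulation} directly and check that every condition which mentions $R_1$ still holds when $R_1$ is replaced by $R_2$. So fix $R_1\subseteq R_2$ and a pair $(P,Q)\in\mathcal{B}(R_1)$; we must show $(P,Q)\in\mathcal{B}(R_2)$. Take any transition $P\arrow{\alpha}\Delta$. Because $(P,Q)\in\mathcal{B}(R_1)$, there is a set $T$ with $Q\dobarrow{}T$ and $\{P\}\sqsupseteq_{R_1}T$. Moreover, for every $Q'\in T$ there is some $\Theta'$ with $Q'\arrow{(\alpha)}\Theta'$ and $\Delta\,R_1^\dagger\,\Theta'$. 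We will use exactly this same $T$, and these same $\Theta'$, as witnesses for $\mathcal{B}(R_2)$.

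The first condition to check is $\{P\}\sqsupseteq_{R_2}T$. By the definition of $\sqsupseteq_R$, every $Q''\in T$ satisfies $(P,Q'')\in R_1$, and hence $(P,Q'')\in R_2$ because $R_1\subseteq R_2$. The weak transition $Q\dobarrow{}T$ and the steps $Q'\arrow{(\alpha)}\Theta'$ do not refer to the relation at all, so they carry over unchanged.

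The second condition is $\Delta\,R_2^\dagger\,\Theta'$, and this follows immediately from Proposition~\ref{Prolift}(1): $R_1\subseteq R_2$ gives $R_1^\dagger\subseteq R_2^\dagger$. If one prefers not to cite that result, it can be checked by induction on the generation of $R_1^\dagger$ in Definition~\ref{lifting-relation}, since both generating rules are monotone in the base relation.

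Putting these together, every transition of $P$ is matched in the sense required by $\mathcal{B}(R_2)$, so $(P,Q)\in\mathcal{B}(R_2)$. The argument has no real obstacle. The only point requiring care is that $R$ enters the definition in just two places, namely $\sqsupseteq_R$ and the lifting $R^\dagger$, and both occurrences are positive, so monotonicity holds.
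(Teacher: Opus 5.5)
Your proof is correct: reusing the same witnesses $T$ and $\Theta'$ and noting that $R$ occurs only positively, in $\sqsupseteq_R$ and in $R^\dagger$ (the latter handled by Proposition~\ref{Prolift}(1)), is exactly the direct unfolding of Definition~\ref{BranchingBisimulation}. The paper states this proposition without proof and treats it as immediate; your argument is the routine verification it leaves implicit.
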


The following useful property about branching bisimulation
is almost a restatement of the definition, and is useful in the following development.
\begin{lemma}\label{Lemma-BranchingBisimulation}
Let $R\subseteq\mathcal{P}\times\mathcal{P}$ be a
branching bisimulation, then $PR\,Q$ implies the following:\\
whenever
$P\arrow{(\alpha)}\Delta$ for some $\alpha\in{Act}$ and $\Delta\in{\sf Disc}(\mathcal P)$,
then $Q\dobarrow{}T$ for some
$T\subseteq\mathcal{P}$ such that 
for every $Q'\in T$ there exists $\Theta'\in{\sf Disc}(\mathcal P)$ with
$Q'\arrow{(\alpha)}\Theta'$ and
$\Delta R^\dagger\Theta'$.
\end{lemma}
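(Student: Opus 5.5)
The proof splits on how $P\arrow{(\alpha)}\Delta$ was derived, since the relation $\arrow{(\alpha)}$ has exactly two generating rules.

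\emph{Case 1: the transition comes from a genuine transition $P\arrow{\alpha}\Delta$.} Since $R$ is a branching bisimulation we have $R\subseteq\mathcal{B}(R)$, so $(P,Q)\in\mathcal{B}(R)$. Unfolding Definition~\ref{BranchingBisimulation} gives a set $T$ with $Q\dobarrow{}T$ and $\{P\}\sqsupseteq_R T$. It also gives, for every $Q'\in T$, some $\Theta'$ with $Q'\arrow{(\alpha)}\Theta'$ and $\Delta R^\dagger\Theta'$. Discarding the extra condition $\{P\}\sqsupseteq_R T$ yields exactly what the lemma requires.

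\emph{Case 2: the transition is the idle step.} Here $\alpha=\tau$ and $\Delta=\delta(P)$, obtained from the axiom $P\arrow{(\tau)}\delta(P)$. We take $T=\{Q\}$, which is allowed because $Q\dobarrow{}\{Q\}$ is the first rule of the weak transition. The only element of $T$ is $Q$. For it we choose $\Theta'=\delta(Q)$, using the idle step $Q\arrow{(\tau)}\delta(Q)$. It remains to check $\delta(P)\,R^\dagger\,\delta(Q)$, which follows from $PR\,Q$ by clause~1 of Definition~\ref{lifting-relation}.

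Both cases are immediate, so there is no real obstacle. The one point needing care is that in Case~2 the transition $P\arrow{(\tau)}\delta(P)$ is not a real $\tau$-transition, so it cannot be fed into $\mathcal{B}(R)$. It must instead be matched directly by the idle step of $Q$, using the trivial weak transition $Q\dobarrow{}\{Q\}$.
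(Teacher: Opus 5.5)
Your proof is correct and matches what the paper intends. The paper gives no written proof and only remarks that the lemma is ``almost a restatement'' of Definition~\ref{BranchingBisimulation}; your case split on the two rules generating $\arrow{(\alpha)}$ supplies exactly that unfolding: for a genuine transition you unfold $\mathcal{B}(R)$ and drop $\{P\}\sqsupseteq_R T$, and for the idle step you use $Q\dobarrow{}\{Q\}$, $Q\arrow{(\tau)}\delta(Q)$ and $\delta(P)\,R^\dagger\,\delta(Q)$.
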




\begin{lemma}\label{weaktransition}
%
Let $R\subseteq\mathcal{P}\times\mathcal{P}$ such that  $R\subseteq\mathcal{B}(R)$. If
$PR\,Q$ and $P\dobarrow{}S$ for some $S\subseteq\mathcal{P}$,
then $Q\dobarrow{}T$ for some $T\subseteq\mathcal {P}$ such that 
$S\sqsupseteq_RT$.
\end{lemma}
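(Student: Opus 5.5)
Since $\dobarrow{}$ is the least relation closed under its three rules, I will argue by induction on the derivation of $P\dobarrow{}S$. The claim, for all $P,Q$ with $P\,R\,Q$, is that there is $T$ with $Q\dobarrow{}T$ and $S\sqsupseteq_R T$, i.e.\ every element of $T$ is $R$-related to some element of $S$.

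\emph{Base case 1:} $S=\{P\}$. Take $T=\{Q\}$, obtained from the first rule. Then $S\sqsupseteq_R T$ because $P\,R\,Q$.

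\emph{Base case 2:} $P\arrow{\tau}\Delta$ and $S=\lceil\Delta\rceil$. Since $R\subseteq\mathcal{B}(R)$, there is a set $T_0$ with $Q\dobarrow{}T_0$ and $\{P\}\sqsupseteq_R T_0$, such that each $Q'\in T_0$ has $Q'\arrow{(\tau)}\Theta_{Q'}$ with $\Delta\,R^\dagger\,\Theta_{Q'}$. For each $Q'\in T_0$ set $T_{Q'}=\lceil\Theta_{Q'}\rceil$. We have $Q'\dobarrow{}T_{Q'}$: if $Q'\arrow{(\tau)}\Theta_{Q'}$ arises from a real $\tau$-step, use the second rule; if $\Theta_{Q'}=\delta(Q')$, then $T_{Q'}=\{Q'\}$ by the first rule. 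Gluing with the third rule (applicable because $T_0$ is finite, as every derivable target set is finite) gives $Q\dobarrow{}T:=\bigcup_{Q'\in T_0}T_{Q'}$. To show $S\sqsupseteq_R T$, let $V\in\lceil\Theta_{Q'}\rceil$. By Proposition~\ref{Prop-DefofLifting}, $\Delta=\Sigma_i p_i\delta(P_i)$ and $\Theta_{Q'}=\Sigma_i p_i\delta(Q_i)$ with $P_i\,R\,Q_i$ and $p_i>0$. Hence $V=Q_i$ for some $i$, and $P_i\in\lceil\Delta\rceil=S$ with $P_i\,R\,V$. The key observation is that positivity of the $p_i$ makes every point of the support of the right-hand distribution matched by a point of the support of the left-hand one.

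\emph{Inductive case:} $P\dobarrow{}\{P_1,\ldots,P_n\}$, $P_k\dobarrow{}S_k$, and $S=\bigcup_k S_k$. By the induction hypothesis on the first premise, $Q\dobarrow{}T'$ with $\{P_1,\ldots,P_n\}\sqsupseteq_R T'$. For each $Q''\in T'$, choose $k$ with $P_k\,R\,Q''$. The induction hypothesis on the premise $P_k\dobarrow{}S_k$, applied to the pair $(P_k,Q'')$, gives $Q''\dobarrow{}T_{Q''}$ with $S_k\sqsupseteq_R T_{Q''}$. Applying the third rule yields $Q\dobarrow{}\bigcup_{Q''\in T'}T_{Q''}=:T$, and each element of $T$ is $R$-related to an element of some $S_k\subseteq S$.

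The main subtlety is setting up the induction correctly: the hypothesis must be quantified over all $R$-related partners, so that it can be applied to the new pairs $(P_k,Q'')$. A secondary point is the finiteness needed to apply the third rule. This follows from a routine side induction showing that every derivable target set is finite, since supports of distributions are finite.
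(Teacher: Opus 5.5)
Your proof is correct and follows the paper's argument essentially step by step. Like the paper, you induct on the derivation of $P\dobarrow{}S$ with the same three cases, using $(P,Q)\in\mathcal{B}(R)$ in the $\tau$-step case and the induction hypothesis on the new pairs $(P_k,Q'')$ in the gluing case. You also spell out two points the paper passes over: that $\Delta\,R^\dagger\,\Theta$ yields $\lceil\Delta\rceil\sqsupseteq_R\lceil\Theta\rceil$ via the decomposition in Proposition~\ref{Prop-DefofLifting}, and that every derivable target set is finite, so the third rule applies.
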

\begin{proof} The proof is carried out by induction on the transition
$P\dobarrow{}S$.
If $S$ is $\{P\}$, then take $\{Q\}$ as $T$, in this case obviously we have $Q\dobarrow{}T$ and $S\sqsupseteq_RT$.

If $P\dobarrow{}S$ is because $P\arrow{\tau}\Delta$ where $S=\lceil\Delta\rceil$, since $R\subseteq\mathcal{B}(R)$, thus 
$(P,Q)\in\mathcal{B}(R)$,
then $Q\dobarrow{}T'$ for some
$T'\subseteq\mathcal{P}$ such that  $\{P\}\sqsupseteq_RT'$,  and moreover 
for every $Q'\in T'$ it holds that 
$Q'\arrow{(\tau)}\Theta'$ (which clearly implies that $Q'\dobarrow{}\lceil\Theta'\rceil$) for some $\Theta'\in{\sf Disc}(\mathcal P)$
with $\Delta R^\dagger\Theta'$. 
Thus, let $T'=\{Q_1,\ldots,Q_n\}$
then there exist $\Theta_1,\ldots,\Theta_n$ such that $Q_i\arrow{(\tau)}\Theta_i$ and $\Delta R^\dagger\Theta_i$ (which implies 
$Q_i\dobarrow{}\lceil\Theta_i\rceil$ and
$\lceil\Delta\rceil\sqsupseteq_R
\lceil\Theta_i\rceil$ and $\lceil\Delta\rceil\sqsupseteq_R \bigcup^n_{i=1}\lceil\Theta_i\rceil$
)
for $i=1,\ldots,n$.  To sum up,  $Q\dobarrow{}\{Q_1,\ldots,Q_n\}$, and $Q_1\dobarrow{}\lceil\Theta_1\rceil, \ldots,
Q_n\dobarrow{}\lceil\Theta_n\rceil$, so $Q\dobarrow{}\bigcup_{i=1}^n\lceil\Theta_i\rceil$.

If $P\dobarrow{}S$ is because $P\dobarrow{}\{P_1,\ldots,P_n\}, P_1\dobarrow{}S_1,\ldots,P_n\dobarrow{}S_n$ where
$S=\bigcup_{i=1}^nS_i$, then by the induction hypothesis
there is $T'\subseteq\mathcal P$ such that $Q\dobarrow{}T'$ and
$\{P_1,\ldots,P_n\}\sqsupseteq_RT'$. Let $T'=\{Q_1,\ldots,Q_m\}$, then for each $Q_j\in T'$,  there is $P_i$ with $P_i R Q_j$, 
and for $P_i\dobarrow{}S_i$, again by the induction hypothesis
there is $T_j\subseteq\mathcal{P}$ such that $Q_j\dobarrow{}T_j$ and $S_i\sqsupseteq_R T_j$.
Thus $Q\dobarrow{}\bigcup_{j=1}^mT_j$. Let $T=\bigcup_{j=1}^m$, and it is easy to see that for every $Q'\in T$
there is $P'\in S$ such that $P' R Q'$, that is $S\sqsupseteq_RT$.
\end{proof}

\begin{lemma}\label{Lemma-transitivity} Let $R_1,R_2\subseteq\mathcal{P}\times\mathcal{P}$. If $R_1\subseteq\mathcal{B}(R_1)$ and
$R_2\subseteq\mathcal{B}(R_2)$, then 
$R_1\cdot R_2\subseteq\mathcal{B}(R_1\cdot R_2)$.
\end{lemma}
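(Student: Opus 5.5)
We prove the lemma directly from the definition of $\mathcal{B}$. Take $(P,O)\in R_1\cdot R_2$, so there is $Q$ with $P\,R_1\,Q$ and $Q\,R_2\,O$. Suppose $P\arrow{\alpha}\Delta$. We must produce $O\dobarrow{}U$ with $\{P\}\sqsupseteq_{R_1\cdot R_2}U$, and for every $O'\in U$ a $\Xi'$ with $O'\arrow{(\alpha)}\Xi'$ and $\Delta(R_1\cdot R_2)^\dagger\Xi'$.

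\textbf{Step 1 (the $R_1$ step).} Since $(P,Q)\in\mathcal{B}(R_1)$, there is $Q\dobarrow{}T$ with $\{P\}\sqsupseteq_{R_1}T$. Moreover, for each $Q'\in T$ there is $\Theta'$ with $Q'\arrow{(\alpha)}\Theta'$ and $\Delta R_1^\dagger\Theta'$.

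\textbf{Step 2 (lifting the $\tau$-path through $R_2$).} Apply Lemma~\ref{weaktransition} to $Q\,R_2\,O$ and $Q\dobarrow{}T$. This gives $O\dobarrow{}V$ with $T\sqsupseteq_{R_2}V$. Hence every $O_1\in V$ has some $Q'\in T$ with $Q'R_2O_1$, and therefore $P\,R_1\,Q'\,R_2\,O_1$.

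\textbf{Step 3 (matching the $\alpha$ step, by case split).} Fix $O_1\in V$ and its $Q'\in T$ from Step 2, with $Q'\arrow{(\alpha)}\Theta'$.
\begin{itemize}
\item \emph{Case $Q'\arrow{\alpha}\Theta'$ is a genuine transition.} Use $(Q',O_1)\in\mathcal{B}(R_2)$ to get $O_1\dobarrow{}U_{O_1}$ with $\{Q'\}\sqsupseteq_{R_2}U_{O_1}$. For each $O'\in U_{O_1}$ we also get $O'\arrow{(\alpha)}\Xi'$ with $\Theta'R_2^\dagger\Xi'$. By Proposition~\ref{Prolift}(2), $\Delta(R_1\cdot R_2)^\dagger\Xi'$. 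Every $O'\in U_{O_1}$ satisfies $P\,R_1\,Q'\,R_2\,O'$.
\item \emph{Case $\alpha=\tau$ and $\Theta'=\delta(Q')$ is the idle step.} Take $U_{O_1}=\{O_1\}$ and $\Xi'=\delta(O_1)$. Then $O_1\arrow{(\tau)}\delta(O_1)$. Also $\Delta R_1^\dagger\delta(Q')$ and $\delta(Q')R_2^\dagger\delta(O_1)$, so Proposition~\ref{Prolift}(2) again gives $\Delta(R_1\cdot R_2)^\dagger\delta(O_1)$.
\end{itemize}

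\textbf{Step 4 (assembling).} Let $V=\{O_1,\dots,O_m\}$ and $U=\bigcup_i U_{O_i}$. The third rule of $\dobarrow{}$ combines $O\dobarrow{}V$ with each $O_i\dobarrow{}U_{O_i}$ to give $O\dobarrow{}U$. Every $O'\in U$ satisfies $P(R_1\cdot R_2)O'$, so $\{P\}\sqsupseteq_{R_1\cdot R_2}U$. Every $O'\in U$ also has the required $\alpha$-match.

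\textbf{Main obstacle.} The delicate point is keeping the $\sqsupseteq$ condition tied to $P$ itself rather than to some intermediate state. The witness $Q'$ must be chosen so that $P\,R_1\,Q'$, which is exactly what $\{P\}\sqsupseteq_{R_1}T$ supplies. The other subtle point is the idle $\arrow{(\tau)}$ case, which must be handled separately because $\mathcal{B}(R_2)$ only speaks of genuine transitions. A further concern is whether $\sqsupseteq$ requires nonempty sets. Weak transitions always end in nonempty sets, since supports of distributions are nonempty, so this causes no problem.
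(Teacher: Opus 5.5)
Your proof is correct. The paper states this lemma without proof, but what it sets up just before is exactly what you combine: Lemma~\ref{weaktransition} (referenced nowhere else in the paper) and part~2 of Proposition~\ref{Prolift}. So your argument is essentially the intended one.

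You also handle explicitly the two points that make it work. First, for each $O_1\in V$ you choose a witness $Q'\in T$ with $P\,R_1\,Q'\,R_2\,O_1$. Second, you treat the idle step $Q'\arrow{(\tau)}\delta(Q')$ separately by letting $O_1$ idle as well. The only thing left tacit is that $V$ is finite, which you need in order to apply the third rule for $\dobarrow{}$; this holds because every weak-transition target is finite, by induction on its rules.
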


\begin{proposition}\label{PropertyofBranchingBisimulation}
Assume that each $R_i$ ($i\in I$) is a branching bisimulation, then
the following relations
are all branching bisimulations:
\begin{enumerate}
\item ${\mathcal Id}_{\mathcal{P}}$,
\item $R_1\cdot R_2\cup R_2\cdot R_1$,
\item $\bigcup_{i\in I}R_i$.
\end{enumerate}
\end{proposition}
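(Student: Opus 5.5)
We verify the three items separately. In each case the relation is either shown to satisfy $R\subseteq\mathcal{B}(R)$ directly, or reduced to Lemma~\ref{Lemma-transitivity} and Proposition~\ref{monotonic}. Symmetry is checked in each case as well.

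\emph{Item 1.} ${\mathcal Id}_{\mathcal{P}}$ is clearly symmetric. Let $P\,{\mathcal Id}_{\mathcal{P}}\,P$ and $P\arrow{\alpha}\Delta$. We take $T=\{P\}$, using the rule $P\dobarrow{}\{P\}$. Then $\{P\}\sqsupseteq_{{\mathcal Id}}T$ holds trivially. For the unique $Q'=P\in T$ we take $\Theta'=\Delta$; indeed $P\arrow{\alpha}\Delta$ gives $P\arrow{(\alpha)}\Delta$ by the second rule for $\arrow{(\alpha)}$. Finally $\Delta\,{\mathcal Id}^\dagger\,\Delta$: write $\Delta=\Sigma_i p_i\delta(P_i)$ and apply Proposition~\ref{Prop-DefofLifting}(2) with $Q_i=P_i$.

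\emph{Item 3.} Let $R=\bigcup_{i\in I}R_i$. It is symmetric because each $R_i$ is. For each $i$ we have $R_i\subseteq R$, so Proposition~\ref{monotonic} gives $R_i\subseteq\mathcal{B}(R_i)\subseteq\mathcal{B}(R)$. Taking the union over $i$ yields $R\subseteq\mathcal{B}(R)$. The inclusion $\mathcal{B}(R_i)\subseteq\mathcal{B}(R)$ quietly uses that $R_i^\dagger\subseteq R^\dagger$ (Proposition~\ref{Prolift}(1)) and that $\sqsupseteq_{R_i}$ implies $\sqsupseteq_R$. Both are already packaged in Proposition~\ref{monotonic}, which we may assume.

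\emph{Item 2.} Let $R=R_1\cdot R_2\cup R_2\cdot R_1$. For symmetry, $(R_1\cdot R_2)^{-1}=R_2^{-1}\cdot R_1^{-1}=R_2\cdot R_1$, since $R_1$ and $R_2$ are symmetric; hence $R$ is symmetric. By Lemma~\ref{Lemma-transitivity}, $R_1\cdot R_2\subseteq\mathcal{B}(R_1\cdot R_2)$ and $R_2\cdot R_1\subseteq\mathcal{B}(R_2\cdot R_1)$. Both composites are contained in $R$, so monotonicity gives $R\subseteq\mathcal{B}(R)$, exactly as in item~3.

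The substantive content therefore lies entirely in Lemma~\ref{Lemma-transitivity}, which we are allowed to assume. The step that needs the most care here is item~1, specifically the fact that the lifting of the identity is the identity on distributions. This is immediate from the decomposition form in Proposition~\ref{Prop-DefofLifting}. We also check that $\{P\}$ is a legitimate target of $\dobarrow{}$, which holds by the first rule.
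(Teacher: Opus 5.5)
Your proof is correct and follows the paper's argument. Item 2 is exactly the paper's combination of Lemma~\ref{Lemma-transitivity} with Proposition~\ref{monotonic}, and your monotonicity argument for item 3 supplies the step the paper omits as straightforward. For item 1 you check the definition directly (with $T=\{P\}$ and $\Theta'=\Delta$), whereas the paper observes that ${\mathcal Id}_{\mathcal{P}}$ is a strong bisimulation and invokes Theorem~\ref{Th-StrongVSbranching}; this is the same idea written out, and it has the small advantage of not relying on a theorem whose proof the paper does not give.
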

\begin{proof}
The proof of 1):
Obviously, ${\mathcal Id}_{\mathcal{P}}$ is a strong bisimulation, and by Theorem
\ref{Th-StrongVSbranching}
every strong bisimulation is a branching bisimulation, hence
${\mathcal Id}_{\mathcal{P}}$ is a branching bisimulation.

The proof of 2): Since $R_1, R_2$ are branching bisimulations, they are symmetric, thus
$R_1\cdot R_2\cup R_2\cdot R_1$ is symmetric, also $R_1\subseteq\mathcal{B}(R_1)$ and $R_2\subseteq
\mathcal{B}(R_2)$, thus by Lemma \ref{Lemma-transitivity} $R_1\cdot R_2\subseteq\mathcal{B}(R_1\cdot R_2)$
and $R_2\cdot R_1\subseteq\mathcal{B}(R_2\cdot R_1)$, thus
$R_1\cdot R_2\cup R_2\cdot R_1\subseteq\mathcal{B}(R_1\cdot R_2\cup R_2\cdot R_1)$ easily follows from 
Proposition \ref{monotonic}, and $R_1\cdot R_2\cup R_2\cdot R_1$ is a branching bisimulation.

We omit the proof of 3), which is straightforward.
\end{proof}

Finally, it is shown via Theorem~\ref{Branchingisequai} that $\approx_b$ is an equivalence relation.

\begin{theorem}\label{Branchingisequai} $\approx_b$ is the largest branching bisimulation, and it is an equivalence relation.
\end{theorem}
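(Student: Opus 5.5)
Set $\approx_b=\bigcup\{R \mid R \text{ is a branching bisimulation}\}$. The proof has two parts. First we show that $\approx_b$ is itself a branching bisimulation, which makes it the largest one. Then we read off reflexivity, symmetry and transitivity from Proposition~\ref{PropertyofBranchingBisimulation}.

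\emph{Largest branching bisimulation.} By Definition~\ref{BranchingBisimulation}, $P\approx_b Q$ holds exactly when $(P,Q)$ belongs to some branching bisimulation. So $\approx_b$ is the union of the family of all branching bisimulations. Item 3 of Proposition~\ref{PropertyofBranchingBisimulation} says such a union is again a branching bisimulation, so $\approx_b$ is one. Every branching bisimulation is contained in the union by construction, hence $\approx_b$ is the largest.

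If we want item 3 to be self-contained, it goes as follows. Let $R_i$, $i\in I$, be branching bisimulations and $R=\bigcup_i R_i$. Symmetry of $R$ is immediate from symmetry of each $R_i$. For every $i$ we have $R_i\subseteq\mathcal{B}(R_i)$, and $\mathcal{B}(R_i)\subseteq\mathcal{B}(R)$ by Proposition~\ref{monotonic}. Hence $R\subseteq\mathcal{B}(R)$.

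\emph{Equivalence relation.}
\begin{itemize}
\item Reflexivity: ${\mathcal Id}_{\mathcal P}$ is a branching bisimulation by item 1 of Proposition~\ref{PropertyofBranchingBisimulation}, so ${\mathcal Id}_{\mathcal P}\subseteq\approx_b$.
\item Symmetry: if $P\approx_b Q$, then $(P,Q)\in R$ for some branching bisimulation $R$. Such an $R$ is symmetric by definition, so $(Q,P)\in R$ and $Q\approx_b P$.
\item Transitivity: take $R_1=R_2=\approx_b$ in item 2 of Proposition~\ref{PropertyofBranchingBisimulation}. Then $\approx_b\cdot\approx_b$ is contained in a branching bisimulation, and therefore in $\approx_b$, because $\approx_b$ is the largest.
\end{itemize}

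The only real difficulty lies in transitivity, and it is hidden in Lemma~\ref{Lemma-transitivity}, which item 2 relies on. That lemma is where the semi-branching style matching must be composed, through Lemma~\ref{weaktransition} and Proposition~\ref{Prolift}(2) for $(R_1\cdot R_2)^\dagger=R_1^\dagger\cdot R_2^\dagger$. Since the lemma is stated earlier and we may assume it, the present theorem reduces to the bookkeeping above.
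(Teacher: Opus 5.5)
Your proposal is correct and follows essentially the same route as the paper. Maximality comes from item 3 of Proposition~\ref{PropertyofBranchingBisimulation}, reflexivity from item 1, symmetry from the definition, and transitivity from item 2, hence ultimately from Lemma~\ref{Lemma-transitivity}; the only cosmetic differences are that you instantiate item 2 with $R_1=R_2=\approx_b$ instead of the two witnessing bisimulations, and you spell out the omitted proof of item 3 via monotonicity of $\mathcal{B}$ (Proposition~\ref{monotonic}).
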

\begin{proof} Notice that
$$\approx_b=\setof{R}{R\mbox{ is a branching bisimulation}},$$
then according to
3) of Proposition \ref{PropertyofBranchingBisimulation}, $\approx_b$ is a branching bisimulation.
If $R$ is a branching bisimulation, then clearly $R\subseteq\approx_b$. Hence
$\approx_b$ is the largest branching bisimulation.

By 1) of Proposition \ref{PropertyofBranchingBisimulation}
${\mathcal Id}_{\mathcal{P}}=\setof{(P,P)}{P\in\mathcal{P}}$ is a branching bisimulation,
thus $P\approx_bP$ for every $P\in\mathcal{P}$, that is $\approx_b$ is reflexive.
As a branching bisimulation $\approx_b$ is clearly symmetric. To see that $\approx_b$ is transitive,
suppose $P\approx_bO, O\approx_bQ$, then there are branching bisimulations $R_1,R_2$ such that
$PR_1O,OR_2Q$, thus $PR_1\cdot R_2Q$ and $P(R_1\cdot R_2\cup R_2\cdot R_1)Q$. By Proposition \ref{PropertyofBranchingBisimulation}
$R_1\cdot R_2\cup R_2\cdot R_1$ is a branching
bisimulation, thus $P\approx_bQ$, which shows that $\approx_b$ is transitive. Hence $\approx_b$ is an equivalence relation.
\end{proof}


\section{Branching Equality}\label{sec5}
In this section, we first prove that branching bisimilarity is preserved by Composition, Restriction, and Relabeling. As Milner \cite{milner89} demonstrated, weak bisimilarity is not fully substitutive and is not preserved by Summation due to the preemptive nature of $\tau$-actions. For example, although \( P \approx \tau.P \), we have \( P + Q \not\approx \tau.P + Q \). The same applies to our branching bisimilarity. To remedy this,  Milner \cite{milner89} introduced the rootedness condition.Therefore, in this section we draw on this idea and introduce a notion of equality~---~branching equality,  denoted by $=_b$, that is fully substitutive. We prove that $=_b$ is a congruence for all the operators, including recursion.

\begin{lemma}\label{decom} Let $P,Q\in\mathcal{P}$.
If $P\dobarrow{}S$ and $Q\dobarrow{}T$ for $S,T\subseteq\mathcal{P}$, then
\begin{enumerate}
\item For all $O\in\mathcal{P}$,
$P|O\dobarrow{}\setof{P'|O}{P'\in S}$ and
$O|P\dobarrow{}\setof{O|P'}{P'\in S}$;
\item
$P|Q\dobarrow{}\setof{P'|Q'}{P'\in S, Q'\in T}$;
\item For all $L\subseteq \mathcal{L}$,
$P\backslash L\dobarrow{}\setof{P'\backslash L}{P'\in S}$;
\item For all relabeling  function $f$,
$P[f]\dobarrow{}\setof{P'[f]}{P'\in S}$.
\end{enumerate}
\end{lemma}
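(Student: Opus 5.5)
All four clauses are proved by structural induction on the derivation of $P\dobarrow{}S$, following the three defining rules of $\dobarrow{}$. We treat clause 1 in detail. The argument rests on the operational rules \textbf{Com}$_1$, \textbf{Com}$_2$, \textbf{Res} and \textbf{Rel}. Clauses 3 and 4 then go exactly like clause 1, and clause 2 is obtained by composing two instances of clause 1.

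For clause 1, fix $O$ and induct on $P\dobarrow{}S$.

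\emph{Base case $S=\{P\}$.} The axiom gives $P|O\dobarrow{}\{P|O\}$ directly.

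\emph{Case $P\arrow{\tau}\Delta$ with $S=\lceil\Delta\rceil$.} Rule \textbf{Com}$_1$ yields
$$P|O\arrow{\tau}\Sigma_{P'\in\lceil\Delta\rceil}\Delta(P')\delta(P'|O).$$
The one point to check is that the support of this distribution is exactly $\setof{P'|O}{P'\in\lceil\Delta\rceil}$. This holds because $P'\mapsto P'|O$ is injective on syntax (as $\equiv$ is syntactic identity), so probabilities do not merge and none vanish. The second rule of $\dobarrow{}$ then gives the claim.

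\emph{Composite case.} Here $P\dobarrow{}\{P_1,\ldots,P_n\}$ and $P_i\dobarrow{}S_i$. The induction hypothesis gives $P|O\dobarrow{}\{P_1|O,\ldots,P_n|O\}$ and $P_i|O\dobarrow{}\setof{P'|O}{P'\in S_i}$. The third rule then yields $P|O\dobarrow{}\bigcup_i\setof{P'|O}{P'\in S_i}$, which equals $\setof{P'|O}{P'\in S}$.

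The symmetric statement for $O|P$ is identical, using \textbf{Com}$_2$.

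Clauses 3 and 4 repeat this induction with \textbf{Res} and \textbf{Rel}. For restriction, the side condition is harmless because the action is $\tau$, and $\tau,\overline\tau\notin L$ since $L\subseteq\mathcal{L}$. For relabelling, $f(\tau)=\tau$. Injectivity of $P'\mapsto P'\backslash L$ and $P'\mapsto P'[f]$ on syntax again makes the support come out exactly right.

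For clause 2, first apply clause 1 with $O:=Q$ to get $P|Q\dobarrow{}\{P_1|Q,\ldots,P_k|Q\}$, where $S=\{P_1,\ldots,P_k\}$. The set $S$ is finite: every set reached by $\dobarrow{}$ is finite by a trivial induction, since supports are finite. Next, for each $i$, apply the $O|P$ form of clause 1 with $O:=P_i$ to $Q\dobarrow{}T$, obtaining $P_i|Q\dobarrow{}\setof{P_i|Q'}{Q'\in T}$. The third rule of $\dobarrow{}$ then combines these into
$$P|Q\dobarrow{}\bigcup_i\setof{P_i|Q'}{Q'\in T}=\setof{P'|Q'}{P'\in S,Q'\in T}.$$

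The only mildly delicate step is the support computation in the single-$\tau$ case, together with the finiteness needed to apply the third rule in clause 2. Both are routine, so I expect no real obstacle.
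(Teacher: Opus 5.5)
Your proposal is correct and follows the paper's proof essentially step for step. Clause 1 is proved by induction on the derivation of $P\dobarrow{}S$ (base, single-$\tau$ via \textbf{Com}$_1$/\textbf{Com}$_2$, composite), clauses 3 and 4 are analogous, and clause 2 chains two instances of clause 1 through the third rule of $\dobarrow{}$. Your remarks on finiteness of reachable sets and on the side conditions of \textbf{Res} and \textbf{Rel} fill in details the paper leaves implicit; injectivity is not needed for the support computation, since all coefficients $\Delta(P')$ are positive, but invoking it does no harm.
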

\begin{proof} We prove item 1 by induction on the transition $P\dobarrow{}S$.
If $S=\{P\}$, then it is obvious that $P|O\dobarrow{}\{P|O\}$, the claim holds in this case.
If $P\dobarrow{}S$ is due to $P\arrow{\tau}\Delta$ where $S=\lceil\Delta\rceil$,
then by ${\bf com}_1$ of Definition \ref{Opsemantics}
we have $P|O\arrow{\tau}\Sigma_{P'\in\lceil\Delta\rceil}\Delta(P')\delta(P'|O)$.
Therefore,
$P|O\dobarrow{}\lceil\Sigma_{P'\in\lceil\Delta\rceil}\Delta(P')\delta(P'|O)\rceil$, 
and it is easy to see that $\lceil\Sigma_{P'\in\lceil\Delta\rceil}\Delta(P')\delta(P'|O)\rceil=
\setof{P'|O}{P'\in\lceil\Delta\rceil}$, so the claim holds in this case.
Now suppose $P\dobarrow{}S$ is due to $P\dobarrow{}\{P_1,\ldots,P_n\}$,
$P_1\dobarrow{}S_1,\ldots, P_n\dobarrow{}S_n$ where $S=\bigcup_{i=1}^nS_i$,
then by the induction hypothesis $P|O\dobarrow{}\{P_1|O,\ldots,P_n|O\}$,
$P_1|O\dobarrow{}\setof{P'|O}{P'\in S_1},\ldots, P_n|O\dobarrow{}\setof{P'|O}{P'\in S_n}$,
then in this case $P|O\dobarrow{}\bigcup_{i=1}^n\setof{P'|O}{P'\in S_i}=\setof{P'|O}{P'\in\bigcup_{i=1}^nS_i}$,
so the claim holds in this case.

The proof of the symmetric case is similar, and thus omitted.

For item 2, according to item 1, we have $P|Q\dobarrow{}\setof{P'|Q}{P'\in S}$.
and moreover, for each $P'\in S$, we have
$P'|Q\dobarrow{}\setof{P'|Q'}{Q'\in T}$.
Thus
$P|Q\dobarrow{}\bigcup_{P'\in S}\setof{P'|Q'}{Q'\in T}=\setof{P'|Q'}{P'\in T,Q'\in S}$, so the claim holds in this case.

The proofs for items 3 and 4 follow a similar structure to that of item 1 and thus omitted.
\end{proof}

\begin{proposition}\label{branch-congruence}
	Let $P,Q\in\mathcal{P}$. If $P\approx_b Q$ then
\begin{enumerate}
	\item for any $O\in\mathcal{P}$, $P\mathbin{|}O\approx_b Q\mathbin{|}O$;
	\item for any $L\subseteq \mathcal{L}$, $P\backslash L\approx_b Q\backslash L$;
	\item for any relabeling  function $f$, $P[f]\approx_b Q[f]$.
\end{enumerate}
\end{proposition}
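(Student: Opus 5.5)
For each item we exhibit a symmetric relation containing the required pair and show it is contained in its image under $\mathcal{B}$. By Theorem~\ref{Branchingisequai}, $\approx_b$ is then the largest branching bisimulation, which gives the result.

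For item 1 the candidate is
$$R=\setof{(P|O,\,Q|O)}{P\approx_b Q,\ O\in\mathcal{P}}.$$
This relation is symmetric because $\approx_b$ is. Two auxiliary facts about lifting will be needed. First, if $\Delta\approx_b^\dagger\Theta$ then $\Sigma\Delta(P')\delta(P'|O)\ R^\dagger\ \Sigma\Theta(Q')\delta(Q'|O)$. Second, if $\Delta\approx_b^\dagger\Theta$, then for any $\Omega$,
$$\Sigma_{(U,V)}\Delta(U)\Omega(V)\delta(U|V)\ R^\dagger\ \Sigma_{(U',V)}\Theta(U')\Omega(V)\delta(U'|V).$$
Both follow from the decomposition characterization in Proposition~\ref{Prop-DefofLifting}(2): write $\Delta=\Sigma p_i\delta(P_i)$ and $\Theta=\Sigma p_i\delta(Q_i)$ with $P_i\approx_b Q_i$, then take the product decomposition with $\Omega$, which pairs $P_i|V$ with $Q_i|V$.

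Now suppose $P|O\arrow{\alpha}\Gamma$. There are three cases, by the rule that derived the transition.

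\textbf{Com}$_2$ (the move comes from $O\arrow{\alpha}\Omega$). Take $T=\{Q|O\}$. Then $Q|O$ makes the same move, and $\delta(P|O')R\,\delta(Q|O')$ pointwise gives the lifting.

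\textbf{Com}$_1$ (the move comes from $P\arrow{\alpha}\Delta$). Apply Definition~\ref{BranchingBisimulation} to $P\approx_b Q$. This yields $Q\dobarrow{}T$ with $\{P\}\sqsupseteq_{\approx_b}T$, and each $Q'\in T$ has $Q'\arrow{(\alpha)}\Theta'$ with $\Delta\approx_b^\dagger\Theta'$. By Lemma~\ref{decom}(1), $Q|O\dobarrow{}\setof{Q'|O}{Q'\in T}$, and $(P|O,\,Q'|O)\in R$ for each such $Q'$. If $Q'\arrow{\alpha}\Theta'$ is a real step, then $Q'|O$ moves by \textbf{Com}$_1$. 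If $\alpha=\tau$ and $\Theta'=\delta(Q')$, then $Q'|O\arrow{(\tau)}\delta(Q'|O)$. In both subcases the first auxiliary fact gives the lifting.

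\textbf{Com}$_3$ (the move comes from $P\arrow{l}\Delta$ and $O\arrow{\bar l}\Omega$). As in the previous case, get $T$ with each $Q'\in T$ satisfying $Q'\arrow{(l)}\Theta'$. Since $l\neq\tau$, this is a genuine step $Q'\arrow{l}\Theta'$. Then $Q'|O\arrow{\tau}\Sigma\Theta'(U')\Omega(V)\delta(U'|V)$ by \textbf{Com}$_3$, and the second auxiliary fact gives the lifting. The weak move $Q|O\dobarrow{}\setof{Q'|O}{Q'\in T}$ again comes from Lemma~\ref{decom}(1).

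Items 2 and 3 use the same pattern. The relations are
$$\setof{(P\backslash L,\,Q\backslash L)}{P\approx_b Q}\quad\mbox{and}\quad\setof{(P[f],\,Q[f])}{P\approx_b Q},$$
with Lemma~\ref{decom}(3),(4) handling the weak moves.

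For restriction, a move $P\backslash L\arrow{\alpha}$ comes from $P\arrow{\alpha}\Delta$ with $\alpha,\bar\alpha\notin L$. The matching $Q'\arrow{(\alpha)}\Theta'$ is therefore either a real $\alpha$ step, which is allowed through the restriction, or a $\tau$ non-move. The weak $\tau$-sequence $Q\dobarrow{}T$ passes the restriction because $\tau\notin L$.

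For relabelling, a move $P[f]\arrow{\beta}$ comes from $P\arrow{\alpha}$ with $f(\alpha)=\beta$. Match by $Q'\arrow{(\alpha)}\Theta'$, which relabels to a $\beta$ step since $f(\tau)=\tau$. The lifting transfers by applying $\cdot\backslash L$ or $\cdot[f]$ pointwise to the decomposition.

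The main obstacle is the \textbf{Com}$_3$ case. Two things must be checked there. First, that each state of $T$ can perform a real $l$ step, which holds because $\arrow{(l)}$ for visible $l$ is just $\arrow{l}$. Second, that the product lifting holds, which is where the decomposition form of Proposition~\ref{Prop-DefofLifting} is essential.
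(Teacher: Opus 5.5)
Your proof is correct and follows the paper's route. It uses the same relation $R=\setof{(P|O,Q|O)}{P\approx_b Q}$, the same case split on \textbf{Com}$_1$/\textbf{Com}$_2$/\textbf{Com}$_3$, and Lemma~\ref{decom} to carry $Q$'s inert $\tau$-steps over to $Q|O$. The differences are cosmetic: you transfer the lifting with the decomposition form (part 2 of Proposition~\ref{Prop-DefofLifting}) where the paper uses weight functions (part 3), and you actually work out the \textbf{Com}$_3$ case and items 2 and 3, which the paper leaves as ``similar''.
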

\begin{proof}
\textbf{The proof of 1:}
Let $O\in\mathcal{P}$. To prove $P|O\approx_b Q|O$, it suffices to show
that the following binary relation $R$ is a branching bisimulation:
$$
R=\{(P|O, Q|O) \mid O,P,Q\in\mathcal{P} \text{ and } P \approx_b Q\}.
$$
First, since $\approx_b$ is symmetric, $R$ is also symmetric.
Suppose $(P|O,Q|O)\in R$ and $P|O\arrow{\alpha}\Delta$.
To complete the proof, we need to show that there exists some $T\subseteq
\mathcal{P}$ such that $Q|O\dobarrow{}T$ with
$\{P|O\}\sqsupseteq _RT$, and moreover for every $U\in T$, it holds that there exists $\Theta'\in{\sf Disc}(\mathcal P)$ such that
$U\arrow{(\alpha)}\Theta'$ and $\Delta R^\dagger\Theta'$.
To proceed with the proof, we analyze the possible transitions based on the operational semantics defined in Definition \ref{Opsemantics}. Specifically, we identify three cases:\\
\textbf{Case 1:} In this case, $P\arrow{\alpha}\Delta_1$ and
$\Delta= \Sigma_{P'\in\lceil\Delta_1\rceil}\Delta_1(P')\delta({P'|O})$.
Since $P\approx_{b} Q$ and $\approx_b$ is a branching bisimulation,  there exists $T_1\subseteq\mathcal{P}$ such that
$Q\dobarrow{}T_1$ with $\{P\}\sqsupseteq_{\approx_{b}}T_1$, and moreover for every $Q'\in T_1$ there is $\Theta'_1\in{\sf Disc}(\mathcal P)$ such that
$Q'\arrow{(\alpha)}\Theta'_1$ and $\Delta_1({\approx_b})^\dagger\Theta'_1$.
By Lemma~\ref{decom}, we have $Q|O \dobarrow{}T$ where $T=\setof{Q'|O}{Q'\in T_1}$.
Since $P\approx_{b} Q'$ for every $(Q'|O)\in T$, we have $\{P|O\}\sqsupseteq_R T$ as required.
Moreover, for the $\Theta'_1$ with $Q'\arrow{(\alpha)}\Theta'_1$ and $\Delta_1(\approx_{b})^\dagger\Theta'_1$,
it holds that $Q'|O\arrow{(\alpha)}\Sigma_{Q''\in\lceil\Theta_1'\rceil}\Theta_1'(Q'')\delta({Q''|O})$. Therefore 
there exists $\Theta'$ such that $(Q'|O)\arrow{(\alpha)}\Theta'$ where
$\Theta'=\Sigma_{Q''\in\lceil\Theta_1'\rceil}\Theta_1'(Q'')\delta({Q''|O})$.
From part 3 of Proposition~\ref{Prop-DefofLifting}, and $\Delta_1(\approx_{b})^\dagger\Theta_1'$, there exists a function $w:\mathcal{P}\times \mathcal{P}\rightarrow [0,1]$ such that:
	\begin{enumerate}
		\item[a)]$P'\approx_{b} Q''$ for all $w(P',Q'')>0$,
		\item[b)]$\Delta_1(P')=\Sigma_{Q''\in\mathcal{P}}w(P',Q'')$ for all $P'\in \mathcal{P}$, and
		\item[c)]$\Theta_1'(Q'')=\Sigma_{P'\in\mathcal{P}}w(P',Q'')$ for all $Q''\in \mathcal{P}$.
	\end{enumerate}
Then we can perform the following calculations:
$$\begin{array}{lclr}
	\Delta&=&\Sigma_{P'\in\lceil\Delta_1\rceil}\Delta_1(P')\delta({P'|O}) \\
	&=&\Sigma_{P'\in\mathcal{P}}\Delta_1(P')\delta({P'|O})& \mbox{$\Delta_1(P')=0$
		for $P'\notin\lceil\Delta_1\rceil$}\\
	&=&
	\Sigma_{P'\in\mathcal{P}}(\Sigma_{Q''\in\mathcal{P}}w(P',Q''))\delta({P'|O})&\mbox{above b)}\\
	&=&
	\Sigma_{P'\in\mathcal{P},Q'\in\mathcal{P}}w(P',Q'')\delta({P'|O})\\
	&=&
	\Sigma_{P'\approx_{b} Q''}w(P',Q'')\delta({P'|O})&\mbox{by a) above, $w(P',Q'')=0$ for $P'\not\approx_{b} Q''$}
\end{array}
$$
Similarly, we can obtain $\Theta'=\Sigma_{P'\approx_{b} Q''}w(P',Q'')\delta({Q''|O})$.
Since $P'\approx_{b} Q''$, we have $(P'|O,Q''|O)\in R$, hence $\Delta R^\dagger \Theta'$, as required.	\\

The proof  of other two cases can be carried out similarly. We omit the proofs of 2 and 3.
\end{proof}

Branching bisimilarity is not a congruence for Summation. For instance, $P\approx_b Q$ does not imply $P+R\approx_b Q+R$. We introduce a new notion of equality, termed branching equality. We formally define this notion below and demonstrate that branching equality is situated between strong congruence and branching bisimilarity.

\begin{definition}\label{branching-equal}
$P$ and $Q$ are branching equal, written $P=_bQ$, 
	if for all $\alpha\in{Act}$
	\begin{enumerate}
		\item Whenever $P\arrow{\alpha}\Delta$ then, for some $\Delta'$, $Q\arrow{\alpha}\Delta'$
		and $\Delta(\approx_b)^\dagger\Delta'$; 
		\item Whenever $Q\arrow{\alpha}\Delta'$ then, for some $\Delta$, $P\arrow{\alpha}\Delta$
		and $\Delta(\approx_b)^\dagger\Delta'$. 
	\end{enumerate}
\end{definition}

\begin{proposition}\label{pro:distinguishing-power}
$\sim$ implies $=_b$, and $=_b$ implies $\approx_b$.
\end{proposition}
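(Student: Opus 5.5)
The two implications can be proved separately. Both come down to the inclusion $\sim\ \subseteq\ \approx_b$ from Theorem~\ref{Th-StrongVSbranching}, to monotonicity of lifting (part 1 of Proposition~\ref{Prolift}), and to the fact that $\approx_b$ is the largest branching bisimulation (Theorem~\ref{Branchingisequai}).

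\textbf{First implication: $\sim$ implies $=_b$.} Assume $P\sim Q$ and let $P\arrow{\alpha}\Delta$. Since $\sim$ is a strong bisimulation (the largest one), there is $\Theta$ with $Q\arrow{\alpha}\Theta$ and $\Delta\,(\sim)^\dagger\,\Theta$. By Theorem~\ref{Th-StrongVSbranching} we have $\sim\ \subseteq\ \approx_b$, so Proposition~\ref{Prolift}(1) gives $\Delta\,(\approx_b)^\dagger\,\Theta$. This is clause 1 of Definition~\ref{branching-equal}. Clause 2 follows in the same way, because $\sim$ is symmetric.

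\textbf{Second implication: $=_b$ implies $\approx_b$.} Define
$$R\;=\;{=_b}\,\cup\,{\approx_b}.$$
Since $=_b$ is symmetric by its definition and $\approx_b$ is symmetric, $R$ is symmetric. The goal is to show $R\subseteq\mathcal{B}(R)$; then $R$ is a branching bisimulation, and Theorem~\ref{Branchingisequai} gives $R\subseteq\approx_b$, hence ${=_b}\subseteq{\approx_b}$.

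For pairs in $\approx_b$ the inclusion holds because ${\approx_b}\subseteq\mathcal{B}(\approx_b)\subseteq\mathcal{B}(R)$, using Proposition~\ref{monotonic}.

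Now take $P=_bQ$ and a transition $P\arrow{\alpha}\Delta$. Clause 1 of Definition~\ref{branching-equal} gives $Q\arrow{\alpha}\Theta$ with $\Delta\,(\approx_b)^\dagger\,\Theta$. Choose $T=\{Q\}$, which is allowed by the reflexive rule $Q\dobarrow{}\{Q\}$. The condition $\{P\}\sqsupseteq_R T$ holds since $(P,Q)\in{=_b}\subseteq R$. For the only element $Q$ of $T$, we use $Q\arrow{(\alpha)}\Theta$, which comes from the second rule defining $\arrow{(\alpha)}$. Finally, $\Delta\,R^\dagger\,\Theta$ follows from ${\approx_b}\subseteq R$ and Proposition~\ref{Prolift}(1). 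This gives $(P,Q)\in\mathcal{B}(R)$.

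\textbf{Main obstacle.} The only point needing care is the requirement $\{P\}\sqsupseteq_R T$ in the definition of $\mathcal{B}$. We cannot work with $\approx_b$ alone, because $P\approx_b Q$ is exactly what we are trying to prove. This is why $R$ includes $=_b$ itself, so that the pair $(P,Q)$ lies in $R$. Everything else is routine unfolding of the definitions.
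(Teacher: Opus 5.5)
Your proposal is correct and follows the paper's proof. The first implication goes via $\sim\subseteq\approx_b$ and monotonicity of lifting. The second shows that ${=_b}\cup{\approx_b}$ is a branching bisimulation; the paper just calls this ``immediate'', while you check it explicitly by taking $T=\{Q\}$ and using $(P,Q)\in{=_b}$ for the $\sqsupseteq_R$ condition.
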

\begin{proof} To show that $\sim$ implies $=_b$, assume $P\sim Q$. Suppose $P\arrow{\alpha}\Delta$.
Since $\sim$ is a strong bisimulation, there exists $\Delta'$ such that $Q\arrow{\alpha}\Delta'$
and $\Delta(\sim)^\dagger\Delta'$. By Theorem \ref{Th-StrongVSbranching}, we have $\sim\,\subseteq\,\approx_b$. Therefore,
$\Delta(\approx_b)^\dagger\Delta'$ by Proposition \ref{Prolift}. Similarly, if $Q\arrow{\alpha}\Delta'$,
then there exists $\Delta$ such that $P\arrow{\alpha}\Delta$ and $\Delta(\approx_b)^\dagger\Delta'$. Thus $P=_bQ$, as required.

To show that $=_b$ implies $\approx_b$, it is enough to show that $=_b\cup\approx_b$ is a branching bisimulation, which is immediate.
\end{proof}

To show that branching equality is a congruence relation, we first prove that it is an equivalence relation.

\begin{proposition}\label{branequaisequiva}
	$=_b$ is an equivalence relation.
\end{proposition}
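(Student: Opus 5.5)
We verify reflexivity, symmetry and transitivity directly from Definition~\ref{branching-equal}. Throughout we use that $\approx_b$ is an equivalence relation (Theorem~\ref{Branchingisequai}) and transfer its properties to the lifted relation $(\approx_b)^\dagger$ on distributions.

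\emph{Reflexivity.} Let $P\in\mathcal{P}$ and suppose $P\arrow{\alpha}\Delta$. We match this transition by itself, so it suffices to show $\Delta(\approx_b)^\dagger\Delta$. Write $\Delta=\Sigma_{P'\in\lceil\Delta\rceil}\Delta(P')\delta(P')$. Since $P'\approx_b P'$ for each $P'$, clause 1 of Definition~\ref{lifting-relation} gives $\delta(P')(\approx_b)^\dagger\delta(P')$. Clause 2 then combines these into $\Delta(\approx_b)^\dagger\Delta$. Alternatively, note ${\mathcal Id}_{\mathcal P}\subseteq\approx_b$ and apply Proposition~\ref{Prolift}(1).

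\emph{Symmetry.} The two clauses of Definition~\ref{branching-equal} are mirror images of each other. So it is enough to show that $(\approx_b)^\dagger$ is symmetric. Take the decomposition of Proposition~\ref{Prop-DefofLifting}(2): $\Delta=\Sigma_i p_i\delta(P_i)$ and $\Theta=\Sigma_i p_i\delta(Q_i)$ with $P_i\approx_b Q_i$. Because $\approx_b$ is symmetric, $Q_i\approx_b P_i$, and the same decomposition read backwards witnesses $\Theta(\approx_b)^\dagger\Delta$.

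\emph{Transitivity.} Assume $P=_bO$ and $O=_bQ$, and suppose $P\arrow{\alpha}\Delta$. From $P=_bO$ we get $O\arrow{\alpha}\Delta''$ with $\Delta(\approx_b)^\dagger\Delta''$. From $O=_bQ$ we then get $Q\arrow{\alpha}\Delta'$ with $\Delta''(\approx_b)^\dagger\Delta'$. Hence $\Delta\,\big((\approx_b)^\dagger\cdot(\approx_b)^\dagger\big)\,\Delta'$. By Proposition~\ref{Prolift}(2) this relation equals $(\approx_b\cdot\approx_b)^\dagger$. Since $\approx_b$ is transitive, $\approx_b\cdot\approx_b\subseteq\approx_b$, so Proposition~\ref{Prolift}(1) gives $\Delta(\approx_b)^\dagger\Delta'$. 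The second clause, starting from a transition of $Q$, is handled in the same way in the opposite direction.

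The only step that is not routine is composing the liftings in the transitivity argument. It is handled entirely by Proposition~\ref{Prolift}(2), which we take as given from \cite{deng15}; everything else is a direct unfolding of the definitions.
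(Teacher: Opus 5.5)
Your proof is correct and follows the paper's argument. The transitivity step, composing the two liftings via Proposition~\ref{Prolift}(2) and then using transitivity of $\approx_b$, is exactly the paper's, while the other differences are minor: the paper gets reflexivity from $P\sim P$ and $\sim\subseteq{=_b}$ rather than from ${\mathcal Id}_{\mathcal P}\subseteq\approx_b$. The paper also calls symmetry immediate, whereas you rightly note that it needs $(\approx_b)^\dagger$ to be symmetric, because Definition~\ref{branching-equal} orients the lifted relation from $P$'s distribution to $Q$'s.
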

\begin{proof}
To show that $=_b$ is an equivalence relation, we need to prove that it is reflexive, symmetric, and transitive.	\\
\textbf{Reflexivity:} Since $P\sim P$ for all $P\in\mathcal{P}$, and $\sim$ implies $=_b$, it follows that $P=_bP$. Therefore, $=_b$ is reflexive.\\
\textbf{Symmetry:} By definition, it is immediate that $=_b$ is symmetric.\\
\textbf{Transitivity:} To prove that $=_b$ is transitive, assume
$P=_b O$ and $O=_b Q$. Suppose $P\arrow{\alpha}\Delta$.
Since  $P=_b O$, there exists $\Omega$ such that
$O\arrow{\alpha}\Omega \text{ and } \Delta(\approx_b)^\dagger\Omega$.
Also, since $O=_b Q$, there exists $\Theta$ such that
$Q\arrow{\alpha}\Theta \text{ and } \Omega(\approx_b)^\dagger\Theta$.
Given $\Delta(\approx_b)^\dagger\Omega$ and $\Omega(\approx_b)^\dagger\Theta$, we have
$\Delta(\approx_b)^\dagger\cdot(\approx_b)^\dagger\Theta$.
By part 2 of Proposition~\ref{Prolift}, this implies
$\Delta(\approx_b\cdot\approx_b)^\dagger\Theta$. Therefore, $\Delta(\approx_b)^\dagger\Theta$.
Similarly, if $Q\arrow{\alpha}\Theta$, then there exists $\Delta$ such that $P\arrow{\alpha}\Delta$
and $\Delta(\approx_b)^\dagger\Theta$.
Thus, $P=_b Q$, and $=_b$ is transitive.
\end{proof}

We now demonstrate that branching equality is a congruence for all operators.

\begin{theorem}\label{branequaiscongru}  For $P,Q\in\mathcal{P}$, if $P=_bQ$ then the following hold:
\begin{enumerate}
\item $\alpha.(\biguplus_{i=1}^n\langle p_i\rangle P_i)=_b
\alpha.(\biguplus_{i=1}^n\langle p_i\rangle Q_i)$, where $P_j$ is $P$, $Q_j$ is $Q$, and $P_i\equiv Q_i$ for $i\not=j$;
\item $\Sigma_{i=1}^nP_i=_b\Sigma_{i=1}^nQ_i$, where $P_j$ is $P$, $Q_j$ is $Q$, and $P_i\equiv Q_i$ for $i\not=j$;
\item $P|O=_bQ|O, O|P=_bO|Q$,  where $O\in\mathcal{P}$;
\item $P\backslash L=_bQ\backslash L$, where $L\subseteq\mathcal{L}$;
\item $P[f]=_bQ[f]$, where $f$ is a relabeling function.
\end{enumerate}
\end{theorem}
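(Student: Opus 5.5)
The plan is to check the two clauses of Definition~\ref{branching-equal} directly for each operator, one item at a time. The main tools are: the fact that $=_b$ implies $\approx_b$ (Proposition~\ref{pro:distinguishing-power}); the congruence of $\approx_b$ for parallel composition, restriction and relabelling (Proposition~\ref{branch-congruence}); and reflexivity of $\approx_b$ (Theorem~\ref{Branchingisequai}). In every case the two clauses are symmetric, so I will treat only transitions of the left-hand process.

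\emph{Items 1 and 2.} For item 1, each side has exactly one transition, by \textbf{Act}: $\alpha.(\biguplus\langle p_i\rangle P_i)\arrow{\alpha}\Sigma_i p_i\delta(P_i)$, and correspondingly with the $Q_i$. Since $P=_bQ$ gives $P\approx_b Q$, and $P_i\equiv Q_i$ gives $P_i\approx_b Q_i$ by reflexivity, clause 2 of Definition~\ref{lifting-relation} yields $\Sigma_ip_i\delta(P_i)(\approx_b)^\dagger\Sigma_ip_i\delta(Q_i)$. For item 2, any transition of $\Sigma P_i$ comes from some summand via \textbf{Sum}. If it comes from $P_i$ with $i\neq j$, the other side performs the identical transition, and $\Delta(\approx_b)^\dagger\Delta$ holds by reflexivity. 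If it comes from $P_j=P$, then clause 1 of $P=_bQ$ gives a strong transition $Q\arrow{\alpha}\Delta'$ with $\Delta(\approx_b)^\dagger\Delta'$, which lifts to the sum by \textbf{Sum}.

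\emph{Items 4 and 5.} These follow the same pattern. A transition $P\backslash L\arrow{\alpha}\Sigma_O\Delta(O)\delta(O\backslash L)$ comes from $P\arrow{\alpha}\Delta$ with $\alpha,\bar\alpha\notin L$, and $P=_bQ$ gives $Q\arrow{\alpha}\Delta'$ with $\Delta(\approx_b)^\dagger\Delta'$. I then need $\Delta(\approx_b)^\dagger\Delta'$ to imply $\Delta\backslash L\,(\approx_b)^\dagger\,\Delta'\backslash L$. To get this, decompose the lifting via item 2 of Proposition~\ref{Prop-DefofLifting} as $\Delta=\Sigma_kq_k\delta(P_k)$, $\Delta'=\Sigma_kq_k\delta(Q_k)$ with $P_k\approx_bQ_k$. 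Then $P_k\backslash L\approx_b Q_k\backslash L$ by Proposition~\ref{branch-congruence}, and the lifting of the images follows, noting that the image distributions are computed by the same pushforward even when some $P_k$ coincide. Relabelling is identical.

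\emph{Item 3 (main obstacle).} This is the one item where the proof is not immediate. By \textbf{Com}$_1$, a move $P\arrow{\alpha}\Delta$ is matched exactly as in items 4 and 5, using Proposition~\ref{branch-congruence}(1) on each pair $P_k|O$, $Q_k|O$. By \textbf{Com}$_2$, a move $O\arrow{\alpha}\Omega$ gives $P|O\arrow{\alpha}\Sigma\Omega(O')\delta(P|O')$, matched by $Q|O\arrow{\alpha}\Sigma\Omega(O')\delta(Q|O')$. Here the pairs $P|O'$, $Q|O'$ are related because $P\approx_bQ$, again by Proposition~\ref{branch-congruence}. The delicate case is \textbf{Com}$_3$: $P\arrow{l}\Delta$ and $O\arrow{\bar l}\Omega$ give a $\tau$-move to $\Sigma_{U,V}\Delta(U)\Omega(V)\delta(U|V)$. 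Because $=_b$ supplies a \emph{strong} matching move $Q\arrow{l}\Delta'$ with $\Delta(\approx_b)^\dagger\Delta'$, I combine it with the same $O\arrow{\bar l}\Omega$. For the product distributions I take weight functions: if $w$ witnesses $\Delta(\approx_b)^\dagger\Delta'$ (item 3 of Proposition~\ref{Prop-DefofLifting}), then $w'(U|V,U'|V)=w(U,U')\Omega(V)$ witnesses the lifting of $\approx_b$ on the products. This uses $U\approx_bU'\Rightarrow U|V\approx_bU'|V$ from Proposition~\ref{branch-congruence}. The step needing care is this weight-function bookkeeping, in particular that the marginals sum correctly when distinct pairs produce syntactically equal parallel states. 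A convenient way to handle it is to work with the decomposition form (item 2) of Proposition~\ref{Prop-DefofLifting} throughout rather than raw weights. The symmetric statement $O|P=_bO|Q$ is handled identically, or by commutativity-style reasoning, symmetrically.
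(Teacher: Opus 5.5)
Your proposal is correct and follows essentially the paper's own route. You check both clauses of $=_b$ operator by operator, using the strong matching move supplied by $P=_bQ$ and the inclusion $=_b\subseteq\approx_b$ (Proposition~\ref{pro:distinguishing-power}), and you use Proposition~\ref{branch-congruence} to push the resulting $(\approx_b)^\dagger$ relation through $|$, $\backslash L$ and $[f]$. For \textbf{Com}$_3$ the paper works with the per-$V$ distributions $\Delta_V,\Theta_V$ and then takes their $\Omega$-weighted convex combination, which amounts to your weight function $w'$.

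The coincidence issue you flag cannot arise: the maps $(U,V)\mapsto U|V$, $O\mapsto O\backslash L$ and $O\mapsto O[f]$ are injective on syntax.
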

\begin{proof}
Suppose $P=_bQ$. Then we have $P\approx_bQ$ by Proposition \ref{pro:distinguishing-power}.\\
\textbf{The proof of 1:} By Definition~\ref{Opsemantics},
$\alpha.(\biguplus_{i=1}^n\langle p_i\rangle P_i)\arrow{\alpha}\Sigma_{i=1}^{n}p_{i}\delta(P_{i})$
is the only transition from $\alpha.(\biguplus_{i=1}^n\langle p_i\rangle P_i)$. Similarly,
$\alpha.(\biguplus_{i=1}^n\langle p_i\rangle Q_i)\arrow{\alpha}\Sigma_{i=1}^{n}p_{i}\delta(Q_{i})$
is the only transition from $\alpha.(\biguplus_{i=1}^n\langle p_i\rangle Q_i)$.
Given the condition $P_j =_b Q_j$ for $i=j$ and $P_i\equiv Q_i$ for $i\neq j$, we obtain
$P_i\approx_bQ_i$ for $i=1,\ldots,n$. Then, we can apply Proposition \ref{Prop-DefofLifting} to obtain
$\Sigma_{i=1}^{n}p_{i}\delta(P_{i})(\approx_b)^\dagger\Sigma_{i=1}^{n}p_{i}\delta(Q_{i})$.
Therefore, by the definition of $=_b$ and the above result, we conclude that
$\alpha.(\biguplus_{i=1}^n\langle p_i\rangle P_i)=_b
\alpha.(\biguplus_{i=1}^n\langle p_i\rangle Q_i)$.\\

\textbf{The proof of 2:} Suppose $\Sigma_{i=1}^nP_i\arrow{\alpha}\Delta$. By Definition~\ref{Opsemantics}, if the transition is due to $P_j\arrow{\alpha}\Delta$. Since $P_j=_bQ_j$, there exists
$\Theta$ such that $Q_j\arrow{\alpha}\Theta$ and  $\Delta(\approx_b)^\dagger\Theta$.
Therefore, $\Sigma_{i=1}^nQ_i\arrow{\alpha}\Theta$ with $\Delta(\approx_b)^\dagger\Theta$.
On the other hand, if the transition is due to $P_i\arrow{\alpha}\Delta$ for some $i\neq j$. Given $P_i\equiv Q_i$, it follows that $Q_i\arrow{\alpha}\Delta$ and $\Delta(\approx_b)^\dagger\Delta$. Thus, $\Sigma_{i=1}^nQ_i\arrow{\alpha}\Delta$ with $\Delta(\approx_b)^\dagger\Delta$.
Similarly, we can show that if $\Sigma_{i=1}^nQ_i\arrow{\alpha}\Theta$, then
there is $\Delta$ such that $\Sigma_{i=1}^nP_i\arrow{\alpha}\Delta$ and $\Delta(\approx_b)^\dagger\Theta$.
Thus, we conclude that $\Sigma_{i=1}^nP_i=_b\Sigma_{i=1}^nQ_i$.\\

\textbf{The proof of 3:} Since the methods of proof for the two situations are similar, we will only prove the first situation here. Suppose $P|O\arrow{\alpha}\Delta$, we need to show that there exists some $\Theta$ such that $Q|O\arrow{\alpha}\Theta$ and $\Delta(\approx_b)^\dagger\Theta$.
According to Definition~\ref{Opsemantics}, we have three cases.\\

\textbf{Case 1:} In this case $P\arrow{\alpha}\Delta_1$ and
$\Delta= \Sigma_{U\in\lceil\Delta_1\rceil}\Delta_1(U)\delta({U|O})$.
Since $P=_b Q$, by Definition~\ref{branching-equal}, there exists $\Delta_2$
such that
$Q\arrow{\alpha}\Delta_2$ and $\Delta_1(\approx_b)^\dagger \Delta_2$.
Then, by Definition~\ref{Opsemantics}, there is $\Theta=\Sigma_{V\in\lceil\Delta_2\rceil}\Delta_2(V)\delta({V|O})$ such that
$Q|O\arrow{\alpha}\Theta$. To complete the proof, we need to show that $\Delta(\approx_b)^\dagger\Theta$.

From part 3 of Proposition~\ref{Prop-DefofLifting} and $\Delta_1(\approx_b)^\dagger \Delta_2$, there is a function $w:\mathcal{P}\times \mathcal{P}\rightarrow [0,1]$ such that:
\begin{enumerate}
	\item[a)]$U\approx_b V$ for all $w(U,V)>0$,
	\item[b)]$\Delta_1(U)=\Sigma_{V\in\mathcal{P}}w(U,V)$ for all $U\in \mathcal{P}$, and
	\item[c)]$\Delta_2(V)=\Sigma_{U\in\mathcal{P}}w(U,V)$ for all $V\in \mathcal{P}$.
\end{enumerate}
Let
$$\Delta^*=\Sigma_{U\approx_b V}w(U,V)\delta{(U|O)},\Theta^*=\Sigma_{U\approx_b V}w(U,V)\delta{(V|O)}.
$$
Since $U\approx_b V$, we have $(U|O,V|O)\in \approx_b$ by Proposition~\ref{branch-congruence},  and
$$ \Sigma_{U\approx_b V}w(U,V)=\Sigma_{U\in\mathcal{P}, V\in\mathcal{P}}w(U,V)=\Sigma_{U\in\mathcal{P}}\Sigma_{V\in\mathcal{P}}w(U,V)=\Sigma_{U\in\mathcal{P}}\Delta_1(U)=1$$
(the first equality is because $w(U,V)=0$ for $U\not\approx_b V$, and the third is because b) above)
then by parts 1), 2) of Proposition~\ref{Prop-DefofLifting}, we have $\Delta^*(\approx_b)^\dagger\Theta^*$, and
moreover we can perform the following calculations:
$$\begin{array}{lclr}
	\Delta&=&\Sigma_{U\in\lceil\Delta_1\rceil}\Delta_1(U)\delta({U|O})& \mbox{construction of $\Delta$}\\
	&=&\Sigma_{U\in\mathcal{P}}\Delta_1(U)\delta({U|O})& \mbox{$\Delta_1(U)=0$
		for $U\notin\lceil\Delta_1\rceil$}\\
	&=&
	\Sigma_{U\in\mathcal{P}}(\Sigma_{V\in\mathcal{P}}w(U,V))\delta({U|O})&\mbox{above b)}\\
	&=&
	\Sigma_{U\in\mathcal{P},V\in\mathcal{P}}w(U,V)\delta({U|O})\\
	&=&
	\Sigma_{U\approx_b V}w(U,V)\delta({U|O})&\mbox{by a) above, $w(U,V)=0$ for $U\not\approx_b V$}\\
	&=&\Delta^*\\
	\Theta&=&\Sigma_{V\in\lceil\Delta_2\rceil}\Delta_2(V)\delta({V|O})& \mbox{construction of $\Theta$}\\
	&=&\Sigma_{V\in\mathcal{P}}\Delta_2(V)\delta({V|O})& \mbox{$\Delta_2(V)=0$
		for $V\notin\lceil\Delta_2\rceil$}\\
	&=&
	\Sigma_{V\in\mathcal{P}}(\Sigma_{U\in\mathcal{P}}w(U,V))\delta({V|O})&\mbox{above c)}\\
	&=&
	\Sigma_{U\in\mathcal{P},V\in\mathcal{P}}w(U,V)\delta({V|O})\\
	&=&
	\Sigma_{U\approx_b V}w(U,V)\delta({V|O})&\mbox{$w(U,V)=0$ for $U\not\approx_b V$}\\
	&=&\Theta^*
\end{array}
$$
Thus, $\Delta (\approx_b)^\dagger \Theta$, as required. \\

\textbf{Case 2:} The transition of $P|O \arrow{\alpha} \Delta$ is induced by
$O \arrow{\alpha} \Delta_1$ with
$\Delta = \Sigma_{O'\in\lceil\Delta_1\rceil}\Delta_1(O')$ $\delta({P|O'})$,
according to Definition~\ref{Opsemantics} .
Similarly, we obtain $Q|O \arrow{\alpha} \Theta$ with
$\Theta = \Sigma_{O'\in\lceil\Delta_1\rceil}\Delta_1(O')$ $\delta({Q|O'})$.
Given $P =_b Q$, by Proposition~\ref{pro:distinguishing-power} we have $P \approx_b Q$ .
By Proposition~\ref{branch-congruence}, we have $(P|O')\approx_b(Q|O')$.
Consequently, we have $ \delta(P|O') (\approx_b)^\dagger \delta(Q|O')$ by Definition~\ref{lifting-relation}.
Therefore, by Definition~\ref{lifting-relation}, we deduce that $\Sigma_{O'\in\lceil\Delta_1\rceil}\Delta_1(O')\delta({P|O'})(\approx_b)^\dagger\Sigma_{O'\in \lceil\Delta_1\rceil}\Delta_1(O')\delta(Q|O')$ , which simplifies to $\Delta (\approx_b)^\dagger\Theta$.
Thus $R$ is branching equal.\\

\textbf{Case 3:} In this case, $\alpha=\tau$ and the
process undergoes a $\tau$ transition triggered by handshake synchronization.
Specifically, there exist $\Delta',\Omega\in{\sf Disc}({\mathcal P})$ and $l\in\mathcal{A}\cup\overline{\mathcal{A}}$ such that $P\arrow{l}\Delta', O\arrow{\bar l}\Omega$, and
$\Delta=\Sigma_{(U,V)\in\lceil\Delta'\rceil\times\lceil\Omega\rceil}
(\Delta'(U)\cdot\Omega(V))\delta(U|V)$.
Since $P =_b Q$, there exists $\Theta'$ such that $Q \arrow{l} \Theta'$ and
$\Delta'(\approx_b)^\dagger \Theta'$.
By Definition~\ref{Opsemantics}, we have $Q|O\arrow{\tau}\Theta$ where
$\Theta=\Sigma_{(W,V)\in\lceil\Theta'\rceil\times\lceil\Omega\rceil}
(\Theta'(W)\cdot\Omega(V))\delta(W|V)$. To complete the proof, we need to show that $\Delta (\approx_b)^\dagger\Theta$.

From Proposition~\ref{Prop-DefofLifting} and $\Delta'(\approx_b)^\dagger \Theta'$, there exists a function $w:\mathcal{P}\times\mathcal{P}\rightarrow [0,1]$ such that:
\begin{enumerate}
	\item[a)]$U\approx_b W$ for all $w(U,W)>0$,
	\item[b)]$\Delta'(U)=\Sigma_{W\in\mathcal{P}}w(U,W)$ for all $U\in\mathcal{P}$, and
	\item[c)]$\Theta'(W)=\Sigma_{U\in\mathcal{P}}w(U,W)$ for all $W\in \mathcal{P}$.
\end{enumerate}
For each $V\in\lceil\Omega\rceil$, define
$$\Delta_V=\Sigma_{U\approx_b W}w(U,W)\delta(U|V), \ \Theta_V=\Sigma_{U\approx_b W}w(U,W)\delta(W|V).
$$
Since $U\approx_b W$, we have $(U|V,W|V)\in \approx_b$ by Proposition~\ref{branch-congruence}. Moreover,
$$\Sigma_{U\approx_b W}w(U,W)=
\Sigma_{U\in\mathcal{P},W\in\mathcal{P}}w(U,W)=\Sigma_{U\in\mathcal{P}}\Sigma_{W\in\mathcal{P}}w(U,W)=
\Sigma_{U\in\mathcal{P}}\Delta'(U)=1.$$
By parts 1) and 2) of Proposition~\ref{Prop-DefofLifting}, we have
$\Delta_V(\approx_b)^\dagger\Theta_V$. \\
Therefore,
$\Sigma_{V\in\lceil\Omega\rceil}\Omega(V)\Delta_V(\approx_b)^\dagger\Sigma_{V\in\lceil\Omega\rceil}\Omega(V)\Theta_V$. We now perform
the following calculations:
$$\begin{array}{lclr}
	\Delta&=&\Sigma_{(U,V)\in\lceil\Delta'\rceil\times\lceil\Omega\rceil}
	(\Delta'(U)\cdot\Omega(V))\delta(U|V)& \mbox{construction of $\Delta$}\\
	&=&\Sigma_{V\in\lceil\Omega\rceil}\Sigma_{U\in\lceil\Delta'\rceil}
	(\Delta'(U)\cdot\Omega(V))\delta(U|V)\\
	&=&\Sigma_{V\in\lceil\Omega\rceil}\Omega(V)\Sigma_{U\in\lceil\Delta'\rceil}
	\Delta'(U)\delta(U|V)\\
	&=&\Sigma_{V\in\lceil\Omega\rceil}\Omega(V)\Sigma_{U\in\mathcal{P}}
	\Delta'(U)\delta(U|V)&\mbox{$\Delta'(U)=0$
		for $U\notin\lceil\Delta'\rceil$}\\
	&=&\Sigma_{V\in\lceil\Omega\rceil}\Omega(V)\Sigma_{U\in\mathcal{P}}
	\Sigma_{W\in\mathcal{P}}w(U,W)\delta(U|V)&\mbox{above b)}\\
	&=&\Sigma_{V\in\lceil\Omega\rceil}\Omega(V)\Sigma_{U\in\mathcal{P},
		W\in\mathcal{P}}w(U,W)\delta(U|V)\\
	&=&\Sigma_{V\in\lceil\Omega\rceil}\Omega(V)\Sigma_{U\approx_b W}w(U,W)\delta(U|V)&\mbox{$w(U,W)=0$ for $U\not\approx_b W$}\\
	&=&\Sigma_{V\in\lceil\Omega\rceil}\Omega(V)\Delta_V\\

	\Theta&=&\Sigma_{(W,V)\in\lceil\Theta'\rceil\times\lceil\Omega\rceil}
	(\Theta'(W)\cdot\Omega(V))\delta(W|V)& \mbox{construction of $\Theta$}\\
	&=&\Sigma_{V\in\lceil\Omega\rceil}\Sigma_{W\in\lceil\Theta'\rceil}
	(\Theta'(W)\cdot\Omega(V))\delta(W|V)\\
	&=&\Sigma_{V\in\lceil\Omega\rceil}\Omega(V)\Sigma_{W\in\lceil\Theta'\rceil}
	\Theta'(W)\delta(W|V)\\
	&=&\Sigma_{V\in\lceil\Omega\rceil}\Omega(V)\Sigma_{W\in\mathcal{P}}
	\Theta'(W)\delta(W|V)&\mbox{$\Theta'(W)=0$
		for $W\notin\lceil\Theta'\rceil$}\\
	&=&\Sigma_{V\in\lceil\Omega\rceil}\Omega(V)\Sigma_{W\in\mathcal{P}}
	\Sigma_{U\in\mathcal{P}}w(U,W)\delta(W|V)&\mbox{above c)}\\
	&=&\Sigma_{V\in\lceil\Omega\rceil}\Omega(V)\Sigma_{W\in\mathcal{P},
		U\in\mathcal{P}}w(U,W)\delta(W|V)\\
	&=&\Sigma_{V\in\lceil\Omega\rceil}\Omega(V)\Sigma_{U\approx_b W}w(U,W)\delta(W|V)&\mbox{$w(U,W)=0$ for $U\not\approx_b W$}\\
	&=&\Sigma_{V\in\lceil\Omega\rceil}\Omega(V)\Theta_V\\
\end{array}
$$
Thus $\Delta (\approx_b)^\dagger\Theta$. The proof for the symmetric case is analogous, and thus we omit it here.\\

\textbf{The proof of 4,5:} The proofs for cases 4 and 5 are analogous, and thus we omit them here.
\end{proof}

Next, we prove that $=_b$ is a congruence for recursion. To this end, we introduce a relation known as branching bisimulation up to, which we will show to be a branching bisimulation. Utilizing this technique allows us to simplify the proof process.

\begin{definition}\label{bran-upto}
	A symmetric  binary relation $R\subseteq\mathcal{P}\times\mathcal{P}$ is a
	{\em branching bisimulation up to $ \approx_b $} if $PR\,Q$ implies the following:
	whenever
	$P\arrow{\alpha}\Delta$ for some $\alpha\in{Act}$ and $\Delta\in{\sf Disc}(\mathcal P)$,
	then 
	$Q\arrow{\alpha}\Theta$ for some
	$\Theta\in{\sf Disc}(\mathcal{P})$ such that
	$\Delta (R\cdot\approx_b)^\dagger\Theta$. 
\end{definition}

\begin{lemma}\label{Lemma-upto}
Let $R$ be a branching bisimulation up to $\approx_b$, then
$R\cdot\approx_b\subseteq\mathcal{B}(R\cdot\approx_b)$.
\end{lemma}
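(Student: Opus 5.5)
We take $(P,O)\in R\cdot\approx_b$, say $P\,R\,Q$ and $Q\approx_b O$, and a transition $P\arrow{\alpha}\Delta$. Our goal is to find $T$ with $O\dobarrow{}T$ and $\{P\}\sqsupseteq_{R\cdot\approx_b}T$, such that every $O'\in T$ has some $O'\arrow{(\alpha)}\Theta'$ with $\Delta(R\cdot\approx_b)^\dagger\Theta'$.

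\textbf{Step 1.} Since $R$ is a branching bisimulation up to $\approx_b$ (Definition~\ref{bran-upto}), we get $Q\arrow{\alpha}\Omega$ with $\Delta(R\cdot\approx_b)^\dagger\Omega$.

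\textbf{Step 2.} By Theorem~\ref{Branchingisequai}, $\approx_b$ is a branching bisimulation with $\approx_b\subseteq\mathcal{B}(\approx_b)$. Applying this to $Q\approx_b O$ and $Q\arrow{\alpha}\Omega$ gives $O\dobarrow{}T$ with $\{Q\}\sqsupseteq_{\approx_b}T$. Moreover, for every $O'\in T$ there is $O'\arrow{(\alpha)}\Theta'$ with $\Omega(\approx_b)^\dagger\Theta'$.

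\textbf{Step 3 (composition of witnesses).} For each $O'\in T$ we have $Q\approx_b O'$, so $P\,(R\cdot\approx_b)\,O'$. This gives $\{P\}\sqsupseteq_{R\cdot\approx_b}T$. For the distributions we have
\[
\Delta\,(R\cdot\approx_b)^\dagger\cdot(\approx_b)^\dagger\,\Theta'.
\]
By part 2 of Proposition~\ref{Prolift}, this relation equals $(R\cdot\approx_b\cdot\approx_b)^\dagger$. Transitivity of $\approx_b$ (Theorem~\ref{Branchingisequai}) gives $\approx_b\cdot\approx_b\subseteq\approx_b$. Monotonicity of lifting (part 1 of Proposition~\ref{Prolift}) then yields $\Delta(R\cdot\approx_b)^\dagger\Theta'$, as required.

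\textbf{Main subtlety.} The delicate point is the case $\alpha=\tau$. There, Step 2 may produce $O'\arrow{(\tau)}\delta(O')$, the idle step. The argument above still covers it, because $(\tau)$-transitions are explicitly allowed in $\mathcal{B}$ and Step 2 only uses the clause of $\mathcal{B}(\approx_b)$ verbatim.

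We also want to make sure Step 1 really produces a genuine single transition of $Q$. It does, since the up-to definition requires $Q\arrow{\alpha}\Theta$ strictly, which is exactly the input that $\mathcal{B}(\approx_b)$ needs. So no induction over weak transitions (such as Lemma~\ref{weaktransition}) is needed.

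Symmetry of $R\cdot\approx_b$ is not required, since the lemma only asserts inclusion in $\mathcal{B}$.
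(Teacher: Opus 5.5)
Your proof is correct and follows the paper's argument essentially step for step. You use the up-to clause to get a strong $\alpha$-step from the $R$-partner, transfer it across $\approx_b$ via $\approx_b\subseteq\mathcal{B}(\approx_b)$, and collapse $(R\cdot\approx_b)^\dagger\cdot(\approx_b)^\dagger$ into $(R\cdot\approx_b)^\dagger$; your explicit appeal to both parts of Proposition~\ref{Prolift} plus transitivity of $\approx_b$ just spells out what the paper compresses into ``by the transitivity of $\approx_b$''.
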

\begin{proof}
Assume $P(R\cdot\approx_b)Q$ and $P\arrow{\alpha}\Delta$. To complete this proof, we need to show that there exists some $T\subseteq\mathcal{P}$ 
such that $Q\dobarrow{}T$ and $\{P\}\sqsupseteq_{R\cdot\approx_b} T$, and  moreover for every $Q' \in T$
there exists $\Theta'$ such that $Q'\arrow{(\alpha)}\Theta'$ and $\Delta (R\cdot\approx_b)^\dagger \Theta'$. 

For $P(R\cdot\approx_b)Q$, there exists $P_1$ such that $PRP_1$ and $P_1\approx_b Q$. 
Since $R$ is a branching bisimulation up to $\approx_b$, $PRP_1$ and $P\arrow{\alpha}\Delta$ implies that
there exists $\Theta_1$ such that $P_1\arrow{\alpha}\Theta_1$ and $\Delta (R\cdot\approx_b)^\dagger \Theta_1$.
Since $P_1\approx_b Q$, there exists $T\subseteq\mathcal{P}$ such that $Q\dobarrow{}T$ and $\{P_1\}\sqsupseteq_{\approx_b}T$, 
and moreover for every $Q' \in T$ we have $P_1 \approx_b Q'$.
Since $PRP_1$ and $P_1 \approx_b Q'$, it follows that $P(R\cdot\approx_b)Q'$. Moreover 
for every $Q' \in T$, there exists $\Theta'$ such that $Q'\arrow{(\alpha)}\Theta'$, and $\Theta_1 (\approx_b)^\dagger \Theta'$.
Since $\Delta (R\cdot\approx_b)^\dagger \Theta_1$ and $\Theta_1 (\approx_b)^\dagger \Theta'$, by the transitivity of $\approx_b$,
we obtain  $\Delta (R\cdot\approx_b)^\dagger \Theta'$. Thus, the second condition is also satisfied.
\end{proof}

\begin{lemma} \label{branuptopro}
If $R$ is a branching bisimulation up to $\approx_b$, then $\approx_b\cdot R\cdot\approx_b$ is a branching bisimulation.
\end{lemma}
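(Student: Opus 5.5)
We have to show that $\approx_b\cdot R\cdot\approx_b$ is symmetric and is contained in $\mathcal{B}(\approx_b\cdot R\cdot\approx_b)$. The plan is to combine Lemma~\ref{Lemma-upto} with the composition result Lemma~\ref{Lemma-transitivity}, and to get symmetry from the symmetry of $R$ and $\approx_b$.

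\emph{Step 1: the simulation condition.} By Lemma~\ref{Lemma-upto}, $R\cdot\approx_b\subseteq\mathcal{B}(R\cdot\approx_b)$. By Theorem~\ref{Branchingisequai}, $\approx_b$ is a branching bisimulation, so $\approx_b\subseteq\mathcal{B}(\approx_b)$. Lemma~\ref{Lemma-transitivity} requires only the hypotheses $R_i\subseteq\mathcal{B}(R_i)$, not symmetry. We therefore apply it with $R_1=\approx_b$ and $R_2=R\cdot\approx_b$, which gives
$$\approx_b\cdot R\cdot\approx_b\subseteq\mathcal{B}(\approx_b\cdot R\cdot\approx_b).$$

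\emph{Step 2: symmetry.} Both $R$ (by Definition~\ref{bran-upto}) and $\approx_b$ are symmetric. Hence
$$(\approx_b\cdot R\cdot\approx_b)^{-1}=\approx_b^{-1}\cdot R^{-1}\cdot\approx_b^{-1}=\approx_b\cdot R\cdot\approx_b.$$
Together with Step 1, this shows that the relation is a branching bisimulation, as required.

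\emph{Main obstacle.} The weak point is the correctness of Lemma~\ref{Lemma-upto} as proved. That proof uses $\Delta(R\cdot\approx_b)^\dagger\Theta_1$ and $\Theta_1(\approx_b)^\dagger\Theta'$ to conclude $\Delta(R\cdot\approx_b)^\dagger\Theta'$. This needs Proposition~\ref{Prolift}(2), $(R\cdot\approx_b\cdot\approx_b)^\dagger=(R\cdot\approx_b)^\dagger\cdot(\approx_b)^\dagger$, together with transitivity of $\approx_b$ (Theorem~\ref{Branchingisequai}). I would recheck that chain when citing the lemma.

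I would also verify the one hidden point in Step 1: that the proof of Lemma~\ref{Lemma-transitivity} indeed uses only $R_i\subseteq\mathcal{B}(R_i)$. It should, via Lemma~\ref{weaktransition}, which is stated under exactly that hypothesis.

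If that dependence failed, the fallback is to prove the inclusion directly. Take $P\approx_b P_1\,R\,Q_1\approx_b Q$ and $P\arrow{\alpha}\Delta$. First match the transition through $P_1$ via Lemma~\ref{weaktransition} and the definition of $\mathcal{B}(\approx_b)$. This yields states of a weak-transition set, each related to $P$ and each performing a $\arrow{(\alpha)}$ step. Then transfer each such state across $R\cdot\approx_b$ using Lemma~\ref{Lemma-upto} and Lemma~\ref{Lemma-BranchingBisimulation}. Finally glue the resulting weak transitions together with the third rule of $\dobarrow{}$.
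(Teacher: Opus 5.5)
Your proposal is correct and is essentially the paper's own proof: symmetry comes from the symmetry of $R$ and $\approx_b$, and the inclusion $\approx_b\cdot R\cdot\approx_b\subseteq\mathcal{B}(\approx_b\cdot R\cdot\approx_b)$ follows from Lemma~\ref{Lemma-upto}, $\approx_b\subseteq\mathcal{B}(\approx_b)$, and Lemma~\ref{Lemma-transitivity} applied with $R_1=\approx_b$ and $R_2=R\cdot\approx_b$. The caveats you raise are harmless: Lemma~\ref{Lemma-transitivity} needs only $R_i\subseteq\mathcal{B}(R_i)$, and the chain in Lemma~\ref{Lemma-upto} goes through via Proposition~\ref{Prolift} and the transitivity of $\approx_b$, so the fallback argument is unnecessary.
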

\begin{proof} Firstly, $\approx_b\cdot R\cdot\approx_b$ is symmetric as
$R$ is so being a branching bisimulation up to $\approx_b$. 
Next, since $\approx_b\subseteq\mathcal{B}(\approx_b)$ and by Lemma \ref{Lemma-upto} 
$R\cdot\approx_b\subseteq\mathcal{B}(R\cdot\approx_b)$,
then by Lemma \ref{Lemma-transitivity} $\approx_b\cdot R\approx_b\subseteq\mathcal{B}(\approx_b\cdot R\cdot\approx_b)$.
Hence
$\approx_b\cdot R\cdot\approx_b$ is a branching bisimulation.
\end{proof}

\begin{proposition}\label{branuptoisbran}
	If $R$ is a branching bisimulation up to $\approx_b$, then $R\subseteq=_b$.
\end{proposition}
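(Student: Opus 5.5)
To prove $R\subseteq{=_b}$, I will take an arbitrary pair $P\,R\,Q$ and check the two clauses of Definition~\ref{branching-equal} directly. Because $R$ is symmetric as a branching bisimulation up to $\approx_b$, it suffices to verify the first clause; the second follows by applying the same argument to $Q\,R\,P$.

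Suppose $P\arrow{\alpha}\Delta$. By Definition~\ref{bran-upto} there is $\Theta$ with $Q\arrow{\alpha}\Theta$ and $\Delta\,(R\cdot\approx_b)^\dagger\,\Theta$. This is already a single strong transition of $Q$, as the first clause of Definition~\ref{branching-equal} requires. It therefore remains only to upgrade $\Delta\,(R\cdot\approx_b)^\dagger\,\Theta$ to $\Delta\,(\approx_b)^\dagger\,\Theta$.

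For this I will show $R\cdot\approx_b\;\subseteq\;\approx_b$.

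\begin{itemize}
\item By Lemma~\ref{branuptopro}, $\approx_b\cdot R\cdot\approx_b$ is a branching bisimulation, so by Theorem~\ref{Branchingisequai} it is contained in $\approx_b$.
\item Since $\approx_b$ is reflexive (Theorem~\ref{Branchingisequai}), $R\cdot\approx_b\;\subseteq\;\approx_b\cdot R\cdot\approx_b$.
\item Combining the two, $R\cdot\approx_b\;\subseteq\;\approx_b$.
\end{itemize}

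Part~1 of Proposition~\ref{Prolift} (monotonicity of lifting) then gives $(R\cdot\approx_b)^\dagger\subseteq(\approx_b)^\dagger$. Hence $\Delta\,(\approx_b)^\dagger\,\Theta$, as required.

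There is no real obstacle here. The only delicate point is that the upgrade must go through a genuine branching bisimulation, and Lemma~\ref{branuptopro} supplies exactly that.

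It may be worth noting that this also shows $R\subseteq\approx_b$, since $R\subseteq R\cdot\approx_b$ by reflexivity. That fact is not needed for $=_b$, however, because Definition~\ref{branching-equal} only constrains the immediate transitions and their lifted targets.
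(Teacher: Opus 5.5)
Your proof is correct and follows the paper's own argument: you take $\Theta$ from the up-to clause, then upgrade $(R\cdot\approx_b)^\dagger$ to $(\approx_b)^\dagger$ via $R\cdot\approx_b\subseteq\approx_b\cdot R\cdot\approx_b\subseteq\approx_b$, using reflexivity of $\approx_b$ and Lemma~\ref{branuptopro}, and finally monotonicity of lifting. Your appeal to the symmetry of $R$ for the second clause makes explicit what the paper leaves as ``we only need to show''; that step also tacitly uses the symmetry of $(\approx_b)^\dagger$, which follows at once from the symmetry of $\approx_b$.
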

\begin{proof}  Suppose $P R Q$, we need to show that $P=_bQ$.
We only need to show that if $P\arrow{\alpha}\Delta$, then there exists
$\Theta$ such that $Q\arrow{\alpha}\Theta$ and $\Delta(\approx_b)^\dagger\Theta$.
Now for $P\arrow{\alpha}\Delta$, since $R$ is a branching bisimulation up to $\approx_b$, there exists
$\Theta$ such that $\Delta(R\cdot\approx_b)^\dagger\Theta$, and then $\Delta(\approx_b)^\dagger\Theta$ follows 
from the following sequence of  inclusions:
$$R\cdot\approx_b\subseteq{\mathcal Id}_{\mathcal{P}}\cdot R\cdot\approx_b\subseteq\approx_b\cdot R\cdot\approx_b
\subseteq\approx_b
$$
The last inclusion is because by Lemma~\ref{branuptopro}  $\approx_b\cdot R\cdot\approx_b$ is a branching bisimulation.
\end{proof}

Next, we shall demonstrate that branching equality remains a congruence for recursively defined probabilistic process expressions. To this end, we first extend the notion of branching equality to probabilistic process expressions with variables via Definition~\ref{bisim-for-open-exp}.

\begin{definition}\label{bisim-for-open-exp}
Let $E,F\in{\mathcal E}$ be two expressions that contain variables $X_1,\ldots,X_n$ at most.
We say that $E$ and $F$ are branching equal, notation $E=_bF$, if for all $P_1,\ldots,P_n\in{\mathcal P}$
it holds that $$E\{P_1/X_1,\ldots,P_n/X_n\}=_bF\{P_1/X_1,\ldots,P_n/X_n\}.$$
\end{definition}

Finally, Proposition~\ref{branchingequalityiscongforrecursion} establishes that branching equality is a congruence for recursively defined process expressions.

\begin{proposition}\label{branchingequalityiscongforrecursion} Let $E_1,\ldots,E_n,F_1,\ldots,F_n\in{\mathcal E}$ contain variables $X_1,\ldots,X_n$ at most, and for $i=1,\ldots,n$
$A_i\stackrel{\rm def}{=}E_i\{A_1/X_1,\ldots,A_n/X_n\}$, 
$B_i\stackrel{\rm def}{=}F_i\{B_1/X_1,\ldots,B_n/X_n\}$. If $E_i=_bF_i$ for $i=1,\ldots,n$,
then $A_i=_bB_i$ for $i=1,\ldots,n$.
\end{proposition}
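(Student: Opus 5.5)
We follow Milner's classical argument for recursion, using the up-to technique just developed. Write $\tilde A$ for $A_1,\ldots,A_n$ and $\tilde B$ for $B_1,\ldots,B_n$, and substitute them for $\tilde X$. Define
$$R=\setof{(G\{\tilde A/\tilde X\},\,G\{\tilde B/\tilde X\}),\ (G\{\tilde B/\tilde X\},\,G\{\tilde A/\tilde X\})}{G\in\mathcal E \mbox{ contains variables } X_1,\ldots,X_n \mbox{ at most}}.$$
The relation $R$ is symmetric by construction. We will show that $R$ is a branching bisimulation up to $\approx_b$ in the sense of Definition~\ref{bran-upto}. Taking $G\equiv X_i$ then gives $A_iRB_i$, and Proposition~\ref{branuptoisbran} yields $A_i=_bB_i$.

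The core claim is the following. If $G\{\tilde A/\tilde X\}\arrow{\alpha}\Delta$, then there is $\Theta$ with $G\{\tilde B/\tilde X\}\arrow{\alpha}\Theta$ and $\Delta(R\cdot\approx_b)^\dagger\Theta$; the symmetric direction is analogous. The proof is by induction on the depth of the derivation of the transition, with a case analysis on the structure of $G$.

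\textbf{Case $G\equiv X_i$.} The transition comes from $A_i\arrow{\alpha}\Delta$, which by \textbf{Con} is derived from $E_i\{\tilde A/\tilde X\}\arrow{\alpha}\Delta$ by a shorter derivation. The induction hypothesis, applied with $E_i$ in place of $G$, gives $E_i\{\tilde B/\tilde X\}\arrow{\alpha}\Theta_1$ with $\Delta(R\cdot\approx_b)^\dagger\Theta_1$. Since $E_i=_bF_i$ (Definition~\ref{bisim-for-open-exp}, instantiated with $\tilde B$), there is $\Theta$ with $F_i\{\tilde B/\tilde X\}\arrow{\alpha}\Theta$ and $\Theta_1(\approx_b)^\dagger\Theta$. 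By \textbf{Con}, $B_i\arrow{\alpha}\Theta$. Proposition~\ref{Prolift} together with transitivity of $\approx_b$ (Theorem~\ref{Branchingisequai}) gives
$$\Delta\,(R\cdot\approx_b\cdot\approx_b)^\dagger\,\Theta\ \subseteq\ (R\cdot\approx_b)^\dagger .$$

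\textbf{Case $G$ a prefix $\alpha.\biguplus_j\langle p_j\rangle G_j$.} The target distributions are $\Sigma_j p_j\delta(G_j\{\tilde A/\tilde X\})$ and $\Sigma_j p_j\delta(G_j\{\tilde B/\tilde X\})$. These are related by $R^\dagger$ directly, and $R\subseteq R\cdot\approx_b$ by reflexivity of $\approx_b$.

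\textbf{Case $G$ a summation or constant.} A constant $A$ with no free variables is identical on both sides, and the identity pair lies in $R$ (take $G\equiv A$). A summation reduces to the induction hypothesis on the chosen summand.

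\textbf{Cases for the static operators.} Parallel composition, restriction and relabelling reduce to the induction hypothesis on the components, followed by a weight-function calculation in the style of Theorem~\ref{branequaiscongru}.

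The main obstacle is the closure step in these static cases, parallel composition in particular. From $\Delta_1(R\cdot\approx_b)^\dagger\Theta_1$ for a component $G_1$ we must derive a relation between $\Sigma\Delta_1(U)\delta(U|G_2\{\tilde A/\tilde X\})$ and $\Sigma\Theta_1(W)\delta(W|G_2\{\tilde B/\tilde X\})$. Pairs $U\,R\,U'\approx_b W$ must become $(U|G_2^A)\,R\,(U'|G_2^B)\approx_b(W|G_2^B)$. The first link holds because $R$ is closed under the operators: if $U\equiv H\{\tilde A/\tilde X\}$ and $U'\equiv H\{\tilde B/\tilde X\}$, then $H|G_2$ witnesses the pair. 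The second link uses Proposition~\ref{branch-congruence}. For synchronisation via \textbf{Com}$_3$, both components move. We would apply the induction hypothesis to each component and decompose with Proposition~\ref{Prop-DefofLifting} into matched point pairs. Each pair then has the form $(H_1|H_2)\{\tilde A/\tilde X\}\ R\ (H_1'|H_2')\ \approx_b\ (W_1|W_2)$. Here $W_1|W_2$ is related to $H_1'|H_2'$ by applying Proposition~\ref{branch-congruence} twice, together with transitivity of $\approx_b$.

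We plan to verify carefully that every state in the supports has the form $H\{\tilde A/\tilde X\}$. This holds because the transition rules only ever produce syntactic instances of subterms of $G$, together with the $E_i$ unfolded through \textbf{Con}.
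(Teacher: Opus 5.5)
Your plan is the paper's. Your $R$ is its $S\cup S^{-1}$, generalised to $n$ variables. You show it is a branching bisimulation up to $\approx_b$ by transition induction with a case split on $G$. Your $G\equiv X_i$ case is exactly the paper's, via $E_i\{\tilde B/\tilde X\}=_bF_i\{\tilde B/\tilde X\}$ and Proposition~\ref{Prolift}.

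However, one step fails as you have set it up. You state the core claim with conclusion $\Delta(R\cdot\approx_b)^\dagger\Theta$ for the \emph{symmetric} $R$. In the parallel case you then rely on ``$R$ is closed under the operators''. That closure holds only when the pairs combined have the same orientation, and your induction hypothesis gives no control over orientation. The decomposition of $\Delta_1(R\cdot\approx_b)^\dagger\Theta_1$ may contain a reversed pair $U\,R\,U'$ with $U\equiv H\{\tilde B/\tilde X\}$ and $U'\equiv H\{\tilde A/\tilde X\}$. In that case $(U|G_2\{\tilde A/\tilde X\},\,U'|G_2\{\tilde B/\tilde X\})$ is in general not in $R$. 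For example, $H\equiv G_2\equiv X_1$ gives $(B_1|A_1,\,A_1|B_1)$. This is neither of the form $(K\{\tilde A/\tilde X\},K\{\tilde B/\tilde X\})$ nor its inverse. The same problem arises for \textbf{Com}$_3$, where two independently obtained pairs are combined; restriction and relabelling are unaffected.

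Your closing remark does not address this. Every closed $P$ trivially has the form $P\{\tilde A/\tilde X\}$, so the invariant you need concerns matched \emph{pairs}, not individual states.

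The repair is the paper's formulation. Prove the claim with the one-directional $S=\setof{(G\{\tilde A/\tilde X\},G\{\tilde B/\tilde X\})}{G \mbox{ has variables among } \tilde X}$, concluding $\Delta(S\cdot\approx_b)^\dagger\Theta$. The induction hypothesis then guarantees that every matched pair is $(H\{\tilde A/\tilde X\},H\{\tilde B/\tilde X\})$ for a single $H$. Since $S$ is genuinely closed under $|$, your \textbf{Com}$_1$ and \textbf{Com}$_3$ arguments go through.

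The converse direction needs a separate induction with conclusion $\Theta(S^{-1}\cdot\approx_b)^\dagger\Delta$. There $B_i$ is unfolded through $F_i$, and you use $F_i\{\tilde A/\tilde X\}=_bE_i\{\tilde A/\tilde X\}$. Both conclusions lie in $(R\cdot\approx_b)^\dagger$ by part 1 of Proposition~\ref{Prolift}. Hence $R$ is a branching bisimulation up to $\approx_b$, and Proposition~\ref{branuptoisbran} finishes the argument as you intended.
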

\begin{proof} For the sake of simplicity, we shall only consider the process expression using a single variable. Assume
$$\begin{array}{ccc}
E=_bF&
A\stackrel{\rm def}{=}E\{A/X\}&
B\stackrel{\rm def}{=}F\{B/X\}.
\end{array}
$$
We need to establish $A=_bB$. It suffices to show that $S\cup S^{-1}$ is a branching bisimulation up to $\approx_{b}$, where
$$S=\setof{(G\{A/X\},G\{B/X\})}{G\mbox{ contains at most variable }X}.
$$
To this end, it suffices to establish the following claim:
\begin{claim}\label{claim2}
 
  If $G\{A/X\}\arrow{\alpha}\Delta$,  then for some $\Theta\in{\sf Disc}(\mathcal P)$, we have  $G\{B/X\}\arrow{\alpha}\Theta$   and $\Delta(S\cdot\approx_b)^\dagger\Theta$.
 
\end{claim}

We prove Claim~\ref{claim2} by transition induction, i.e. induction on the depth of the inference for the transition $G\{A/X\}\arrow{\alpha}\Delta$.
We distinguish cases based on the form of $G$.\\

\textbf{1) $G\equiv X$.} In this case, we have $G\{A/X\}\equiv A$ and $G\{B/X\}\equiv B$. We need to prove that for any transition from A, e.g., $A\arrow{\alpha}\Delta$, there exists a $\Theta$ such that $B\arrow{\alpha}\Theta$ with $\Delta(S\cdot\approx_b)^\dagger\Theta$.
Since $A\stackrel{{\rm def}}{=} E\{A/X\}$,  then $A\arrow{\alpha}\Delta$ is inferred from 
$E\{A/X\}\arrow{\alpha}\Delta$,
hence, induction can be applied to conclude that $E\{B/X\}\arrow{\alpha}\Theta'$ with $\Delta(S\cdot\approx_b)^\dagger\Theta'$.
Now $E =_b F$, thus $E\{B/X\}=_bF\{B/X\}$, there exists a $\Theta$ such that $F\{B/X\}\arrow{\alpha}\Theta$ with $\Theta' (\approx_b)^\dagger \Theta$.
Since $B\stackrel{{\rm def}}{=} F\{B/X\}$, we get  $B \arrow{\alpha}\Theta$.
By proposition~{\ref{Prolift}},$\Delta(S\cdot\approx_b)^\dagger\Theta$ , as required.\\

\textbf{2) $G\equiv \alpha.\biguplus_{j\in J}\langle p_{j}\rangle G'_{j}$.}
In this case, $G\{A/X\}\equiv\alpha.\biguplus_{j\in J}\langle p_{j}\rangle G'_{j}\{A/X\} \arrow{\alpha} \Sigma_{j\in J}p_{j}\delta(G'_{j}\{A/X\})$.
Similarly, $G\{B/X\}\equiv\alpha.\biguplus_{j\in J}\langle p_{j}\rangle G'_{j}\{B/X\} \arrow{\alpha} \Sigma_{j\in J}p_{j}\delta(G'_{j}\{B/X\})$.
Since $(G'_{j}\{A/X\},G'_{j}\{B/X\})\in S$ for every $j\in J$, we get that $\Sigma_{j\in J}p_{j}\delta(G'_{j}\{A/X\}) S^ \dagger \Sigma_{j\in J}p_{j}\delta(G'_{j}\{B/X\})$ by Definition~{\ref{lifting-relation}}, as required.\\

\textbf{3) $G\equiv \sum_{i\in I}G'_{i}$.}
In this case, $G\{A/X\}\equiv\sum_{i\in I}G'_i\{A/X\}$. A transition from $G\{A/X\} $ is derivable if and only if $G'_n\{A/X\}\arrow{\alpha}\Delta$ for some $n\in I$.
Without loss of generality, assume the transition is due to $G'_1\{A/X\}$. By  hypothesis, there exists a $\Theta$ such that $G'_1\{B/X\}\arrow{\alpha}\Theta$ with $\Delta\, (S\cdot\approx_b)^\dagger\Theta$.
By the operational semantics, this ensures that $G\{B/X\}\equiv \sum_{i\in I}G'_i\{B/X\}\arrow{\alpha}\Theta$ with $\Delta\, ( S\cdot\approx_b)^\dagger\Theta$, as required.\\

\textbf{4) $G\equiv G_1|G_2$.} In this case, $G\{A/X\}\equiv G_1\{A/X\}|G_2\{A/X\}$.
There are three cases for the action $G\{A/X\}\arrow{\alpha}\Delta$, according to whether it arises from one or other component alone or from a communication.\\

\textbf{Case 1:} In this case, the $\alpha$ transition is due to $G_1\{A/X\}\arrow{\alpha}\Delta_1$ . Hence, by Definition~\ref{Opsemantics}, we have $\Delta=\Sigma_{P'\in\lceil\Delta_1\rceil}\Delta_1(P')\delta(P'|G_2\{A/X\})$ .
By the induction hypothesis, we have $G_1\{B/X\} \arrow{\alpha} \Delta_2$
with $\Delta_1 (S\,\cdot\approx_b)^\dagger \Delta_2$ . Then by definition~\ref{Opsemantics}, there exists $\Theta$ such that $G\{B/X\} \arrow{\alpha} \Theta$ where
$\Theta= \Sigma_{Q'\in\lceil\Delta_2\rceil}\Delta_2(Q')\delta(Q'|G_2\{B/X\})$.
Now we only need to show that
$\Delta(S\,\cdot\approx_b)^\dagger \Theta$ holds to establish the case.\\
Since $\Delta_1 (S\,\cdot\approx_b)^\dagger \Delta_2$, there exists some $\Delta_3$ such that $\Delta_1 S^\dagger \Delta_3$ and $\Delta_3 (\approx_b)^\dagger \Delta_2$. \\
Let
$\Theta'=\Sigma_{P''\in\mathcal{P}}\Delta_3(P'')\delta(P''|G_2\{B/X\})$. Then, to establish
$\Delta(S\,\cdot\approx_b)^\dagger \Theta$, it is sufficient to show that $\Delta S^\dagger\Theta'$ and $\Theta'(\approx_b)^\dagger \Theta$.
From part 3 of Proposition~\ref{Prop-DefofLifting} and $\Delta_1 S^\dagger \Delta_3$,
there exists a function $w_1:\mathcal{P}\times\mathcal{P}\rightarrow [0,1]$ such that:
\begin{enumerate}
	\item[a)] $P'S P''$ for all $w_1(P',P'')>0$,
	\item[b)] $\Delta_1(P')=\Sigma_{P''\in\mathcal{P}}w_1(P',P'')$ for all $P'\in \mathcal{P}$, and
	\item[c)] $\Delta_3(P'')=\Sigma_{P'\in\mathcal{P}}w_1(P',P'')$ for all $P''\in \mathcal{P}$.
\end{enumerate}
Then we have the following calculations:
$$\begin{array}{lclr}
	\Delta&=&\Sigma_{P'\in\lceil\Delta_1\rceil}\Delta_1(P')\delta(P'|G_2\{A/X\})&  \\
	&=&\Sigma_{P'\in\mathcal{P}}\Delta_1(P')\delta(P'|G_2\{A/X\})& \mbox{$\Delta_1(P')=0$
		for $P'\notin\lceil\Delta_1\rceil$}\\
	&=&
	\Sigma_{P'\in\mathcal{P}}(\Sigma_{P''\in\mathcal{P}}w_1(P',P''))\delta(P'|G_2\{A/X\})&\mbox{above b)}\\
	&=&
	\Sigma_{P'\in\mathcal{P},P''\in\mathcal{P}}w_1(P',P'')\delta(P'|G_2\{A/X\})\\
	&=&
	\Sigma_{P' S P''}w_1(P',P'')\delta(P'|G_2\{A/X\})&\mbox{$w_1(P',P'')=0$ for $(P',P'')\notin S $}\\
\end{array}
$$
Similarly, we can obtain
$\Theta'=\Sigma_{P' S P''}w_1(P',P'')\delta(P''|G_2\{B/X\})$.\\
Since $P'S P''$ and $G_2\{A/X\} S G_2\{B/X\}$, we have $(P'|G_2\{A/X\},P''|G_2\{B/X\})\in S$ , thus $\Delta S^\dagger\Theta'$.

On the other hand, from part 3 of Proposition~\ref{Prop-DefofLifting} and $\Delta_2 (\approx_b)^\dagger \Delta_3$,
there exists a function $w_2:\mathcal{P}\times\mathcal{P}\rightarrow [0,1]$ such that:
\begin{enumerate}
	\item[a)] $Q'\approx_b P''$ for all $w_2(Q',P'')>0$,
	\item[b)] $\Delta_2(Q')=\Sigma_{P''\in\mathcal{P}}w_2(Q',P'')$ for all $Q'\in \mathcal{P}$, and
	\item[c)] $\Delta_3(P'')=\Sigma_{Q'\in\mathcal{P}}w_2(Q',P'')$ for all $P''\in \mathcal{P}$.
\end{enumerate}
Then we have the following calculations:
$$\begin{array}{lclr}
	\Theta&=&\Sigma_{Q'\in\lceil\Delta_2\rceil}\Delta_2(Q')\delta(Q'|G_2\{B/X\})&  \\
	&=&\Sigma_{Q'\in\mathcal{P}}\Delta_2(Q')\delta({Q'|G_2\{B/X\}})& \mbox{$\Delta_2(Q')=0$
		for $Q'\notin\lceil\Delta_2\rceil$}\\
	&=&
	\Sigma_{Q'\in\mathcal{P}}(\Sigma_{P''\in\mathcal{P}}w_2(Q',P''))\delta({Q'|G_2\{B/X\}})&\mbox{above b)}\\
	&=&
	\Sigma_{Q'\in\mathcal{P},P''\in\mathcal{P}}w_2(Q',P'')\delta({Q'|G_2\{B/X\}})\\
	&=&
	\Sigma_{Q' \approx_b P''}w_2(Q',P'')\delta(Q'|G_2\{B/X\})&\mbox{$w_2(Q',P'')=0$ for $Q'\not\approx_b P''$}\\
\end{array}
$$
Similarly, we can obtain $\Theta'=\Sigma_{Q'\approx_b P''}w_2(Q',P'')\delta({P''|G_2\{B/X\}})$.
Since $Q'\approx_b P''$ , by congruence we have $(Q'|G_2\{B/X\})\approx_b(P''|G_2\{B/X\})$,  thus $\Theta(\approx_b)^\dagger\Theta'$, as required.
\\

\textbf{Case 2:} In contrast to Case 1, the $\alpha$ transition is due to $G_2\{A/X\}$.The proof of this case is similar to that of Case 1 and is thus omitted for brevity.\\

\textbf{Case 3:} We consider the case where $\alpha = \tau$, and we have $G_1\{A/X\}\arrow{l}\Delta_1$, and $G_2\{A/X\}\arrow{\bar{l}}\Delta_2$.
According to Definition~\ref{Opsemantics}, we have  $\Delta=\Sigma_{(U,V)\in\lceil\Delta_1\rceil\times\lceil\Delta_2\rceil}(\Delta_1(U)\cdot\Delta_2(V))\delta(U|V)$ .
Furthermore, based on the induction hypothesis, we have $G_1\{B/X\} \arrow{l} \Theta_1$
with $\Delta_1 (S\,\cdot\approx_b)^\dagger \Theta_1$, and $G_2\{B/X\} \arrow{\bar{l}} \Theta_2$
with $\Delta_2 (S\,\cdot\approx_b)^\dagger \Theta_2$.
Then by Definition~\ref{Opsemantics} again, there exists $\Theta$ such that $G\{B/X\} \arrow{\tau} \Theta$ where
$\Theta= \Sigma_{(U',V')\in\lceil\Theta_1\rceil\times\lceil\Theta_2\rceil}(\Theta_1(U')\cdot\Theta_2(V'))\delta(U'|V')$. Now we only need to show
$\Delta(S\,\cdot\approx_b)^\dagger \Theta$ to establish the case.\\
Since $\Delta_1 (S\,\cdot\approx_b)^\dagger \Theta_1$, there exists some $\Delta_1^{'}$ such that $\Delta_1 S^\dagger \Delta_1^{'} (\approx_b)^\dagger \Theta_1$ . In the same manner, since $\Delta_2 (S\,\cdot\approx_b)^\dagger \Theta_2$, there exists some $\Delta_2^{'}$ such that $\Delta_2 S^\dagger \Delta_2^{'} (\approx_b)^\dagger \Theta_2$ .\\
Let $\Delta'=\Sigma_{(U'',V'')\in\lceil\Delta_1^{'}\rceil\times\lceil\Delta_2^{'}\rceil}(\Delta_1^{'}(U'')\cdot\Delta_2^{'}(V''))\delta(U''|V'')$, then to establish $\Delta (S\cdot\approx_b)^\dagger \Theta $
it is sufficient to show that $\Delta S^\dagger \Delta'$
and $\Delta' (\approx_b)^\dagger \Theta$.
From part 3 of Proposition~\ref{Prop-DefofLifting} and $\Delta_1 S^\dagger \Delta_1^{'}$,
there exists a function $w_1:\mathcal{P}\times\mathcal{P}\rightarrow [0,1]$ such that:
\begin{enumerate}
	\item[a)] $U S U''$ for all $w_1(U,U'')>0$,
	\item[b)] $\Delta_1(U)=\Sigma_{U''\in\mathcal{P}}w_1(U,U'')$ for all $U\in \mathcal{P}$, and
	\item[c)] $\Delta_1^{'}(U'')=\Sigma_{U\in\mathcal{P}}w_1(U,U'')$ for all $U''\in \mathcal{P}$.
\end{enumerate}
Following the same reasoning, from part 3 of Proposition~\ref{Prop-DefofLifting} and $\Delta_2 S^\dagger \Delta_2^{'}$,
there exists a function $w_2:\mathcal{P}\times\mathcal{P}\rightarrow [0,1]$ such that:
\begin{enumerate}
	\item[d)] $V S V''$ for all $w_2(V,V'')>0$,
	\item[e)] $\Delta_2(V)=\Sigma_{V''\in\mathcal{P}}w_2(V,V'')$ for all $V\in \mathcal{P}$, and
	\item[f)] $\Delta_2^{'}(V'')=\Sigma_{V\in\mathcal{P}}w_2(V,V'')$ for all $V''\in \mathcal{P}$.
\end{enumerate}	
Based on these definitions and conditions, we can proceed with the following calculations:
$$\begin{array}{lclr}
	\Delta&=&\Sigma_{(U,V)\in\lceil\Delta_1\rceil\times\lceil\Delta_2\rceil}(\Delta_1(U)\cdot\Delta_2(V))\delta(U|V)&  \\
	&=&\Sigma_{U\in\lceil\Delta_1\rceil}\Delta_1(U)\Sigma_{V\in\lceil\Delta_2\rceil}\Delta_2(V)\delta(U|V)&  \\
	&=&\Sigma_{U\in\mathcal{P}}\Delta_1(U)\Sigma_{V\in\lceil\Delta_2\rceil}\Delta_2(V)\delta(U|V)& \mbox{$\Delta_1(U)=0$ for $U\notin\lceil\Delta_1\rceil$}\\
	&=&\Sigma_{U\in\mathcal{P}}\Delta_1(U)\Sigma_{V\in\mathcal{P}}\Delta_2(V)\delta(U|V)& \mbox{$\Delta_2(V)=0$	for $V\notin\lceil\Delta_2\rceil$}\\
	&=&\Sigma_{U\in\mathcal{P}}(\Sigma_{U''\in\mathcal{P}}w_1(U,U''))\Sigma_{V\in\mathcal{P}}(\Sigma_{V''\in\mathcal{P}}w_2(V,V''))\delta(U|V)&\mbox{above b),e)}\\
	&=&\Sigma_{U\in\mathcal{P},U''\in\mathcal{P}}w_1(U,U'')\Sigma_{V\in\mathcal{P},V''\in\mathcal{P}}w_2(V,V'')\delta(U|V)\\
	&=&\Sigma_{U S U''}w_1(U,U'')\Sigma_{V\in\mathcal{P},V''\in\mathcal{P}}w_2(V,V'')\delta(U|V)&\mbox{$w_1(U,U'')=0$ for $(U,U'')\notin S $}\\
	&=&\Sigma_{U S U''}w_1(U,U'')\Sigma_{V S V''}w_2(V,V'')\delta(U|V)&\mbox{$w_2(V,V'')=0$ for $(V,V'')\notin S $}\\
\end{array}
$$
Similarly, we can obtain $\Delta'=\Sigma_{U S U''}w_1(U,U'')\Sigma_{V S V''}w_2(V,V'')\delta(U''|V'')$. We know that $(U|V,U''|V'')\in S$. Therefore, it follows that $\Delta S^\dagger \Delta'$.\\
Since $\Theta_1 (\approx_b)^\dagger \Delta_1^{'}$, from part 3 of Proposition~\ref{Prop-DefofLifting},
there exists a function $w_3:\mathcal{P}\times\mathcal{P}\rightarrow [0,1]$ such that
\begin{enumerate}
	\item[g)] $U' \approx_b U''$ for all $w_3(U',U'')>0$,
	\item[h)] $\Delta_1^{'}(U'')=\Sigma_{U'\in\mathcal{P}}w_3(U',U'')$ for all $U''\in \mathcal{P}$, and
	\item[i)] $\Theta_1(U')=\Sigma_{U''\in\mathcal{P}}w_3(U',U'')$ for all $U'\in \mathcal{P}$.
\end{enumerate}
Similarly, since $\Theta_2 (\approx_b)^\dagger \Delta_2^{'}$, from part 3 of Proposition~\ref{Prop-DefofLifting},there exists a function $w_4:\mathcal{P}\times\mathcal{P}\rightarrow [0,1]$ such that
\begin{enumerate}
	\item[j)] $V' \approx_b V''$ for all $w_4(V',V'')>0$,
	\item[k)] $\Delta_2^{'}(V'')=\Sigma_{V'\in\mathcal{P}}w_4(V',V'')$ for all $V''\in \mathcal{P}$, and
	\item[l)] $\Theta_2(V')=\Sigma_{V''\in\mathcal{P}}w_4(V',V'')$ for all $V'\in \mathcal{P}$.
\end{enumerate}
Next, We proceed with the following calculations:
$$\begin{array}{lclr}		
	\Theta&=&\Sigma_{(U',V')\in\lceil\Theta_1\rceil\times\lceil\Theta_2\rceil}(\Theta_1(U')\cdot\Theta_2(V'))\delta(U|V)&  \\
	&=&\Sigma_{U'\in\lceil\Theta_1\rceil}\Theta_1(U')\Sigma_{V'\in\lceil\Theta_2\rceil}\Theta_2(V')\delta(U'|V')&  \\
	&=&\Sigma_{U'\in\mathcal{P}}\Theta_1(U')\Sigma_{V'\in\lceil\Theta_2\rceil}\Theta_2(V')\delta(U'|V')& \mbox{$\Theta_1(U')=0$ for $U'\notin\lceil\Theta_1\rceil$}  \\
	&=&\Sigma_{U'\in\mathcal{P}}\Theta_1(U')\Sigma_{V'\in\mathcal{P}}\Theta_2(V')\delta(U'|V')& \mbox{$\Theta_2(V')=0$ for $V'\notin\lceil\Theta_2\rceil$}  \\
	&=&\mbox{\small$\displaystyle
		\Sigma_{U'\in\mathcal{P}}(\Sigma_{U''\in\mathcal{P}}w_3(U',U''))\Sigma_{V'\in\mathcal{P}}(\Sigma_{V''\in\mathcal{P}}w_4(V,V''))\delta(U'|V')$}&\mbox{above i),l)}\\
	&=&\Sigma_{U'\in\mathcal{P},U''\in\mathcal{P}}w_3(U',U'')\Sigma_{V'\in\mathcal{P},V''\in\mathcal{P}}w_4(V',V'')\delta(U'|V')\\
	&=&\Sigma_{U' \approx_b U''}w_3(U',U'')\Sigma_{V'\in\mathcal{P},V''\in\mathcal{P}}w_4(V',V'')\delta(U'|V')&\mbox{$w_3(U',U'')=0$ for $ U'\not\approx_b U''  $}\\
	&=&\Sigma_{U' \approx_b U''}w_3(U',U'')\Sigma_{V' \approx_b V''}w_4(V',V'')\delta(U'|V')&\mbox{$w_4(V',V'')=0$ for $V' \not\approx_b V'' $}\\
\end{array}
$$
Similarly, we can obtain $\Delta'=\Sigma_{U' \approx_b U''}w_3(U',U'')\Sigma_{V' \approx_b V''}w_4(V',V'') \delta(U''|V'')$.
Given that $U' \approx_b U''$ and $V' \approx_b V''$, it follows by congruence that $(U'|V')\approx_b(U''|V'')$. Consequently, we establish that $\Theta (\approx_b)^\dagger \Delta'$. This completes the proof.\\

\textbf{5) $G\equiv G_1\backslash L$.} In this case, $G\{A/X\} \equiv G_1\{A/X\}\backslash L$.
Assume $G\{A/X\} \arrow{\alpha} \Delta$, then there exists a transition $G_1\{A/X\} \arrow{\alpha} \Delta_1$
with $\alpha \notin L\cup\overline{L}$ such that
$\Delta= \Sigma_{P'\in\lceil\Delta_1\rceil}\Delta_1(P')\delta({P'\backslash L})$.
By the induction hypothesis, we have $G_1\{B/X\} \arrow{\alpha} \Delta_2$
with $\Delta_1 (S\,\cdot\approx_b)^\dagger \Delta_2$ . Then by Definition~\ref{Opsemantics}, there exists $\Theta$ such that $G\{B/X\} \arrow{\alpha} \Theta$ where
$\Theta= \Sigma_{Q'\in\lceil\Delta_2\rceil}\Delta_2(Q')\delta({Q'\backslash L})$. Now we only need to show
$\Delta(S\,\cdot\approx_b)^\dagger \Theta$ to establish the case. 
Since $\Delta_1 (S\,\cdot\approx_b)^\dagger \Delta_2$, there exists some $\Delta_3$ such that $\Delta_1 S^\dagger \Delta_3 (\approx_b)^\dagger \Delta_2$, and let
$\Theta'=\Sigma_{P''\in\mathcal{P}}\Delta_3(P'')\delta(P''\backslash L)$, then to establish
$\Delta(S\,\cdot\approx_b)^\dagger \Theta$ it is sufficient to show that $\Delta S^\dagger\Theta'(\approx_b)^\dagger \Theta$.
From 3 of Proposition~\ref{Prop-DefofLifting} and $\Delta_1 S^\dagger \Delta_3$,
there exists a function $w_1:\mathcal{P}\times\mathcal{P}\rightarrow [0,1]$ such that
\begin{enumerate}
	\item[a)] $P'S P''$ for all $w_1(P',P'')>0$,
	\item[b)] $\Delta_1(P')=\Sigma_{P''\in\mathcal{P}}w_1(P',P'')$ for all $P'\in \mathcal{P}$, and
	\item[c)] $\Delta_3(P'')=\Sigma_{P'\in\mathcal{P}}w_1(P',P'')$ for all $P''\in \mathcal{P}$.
\end{enumerate}
Then we have the following calculations:
$$\begin{array}{lclr}
	\Delta&=&\Sigma_{P'\in\lceil\Delta_1\rceil}\Delta_1(P')\delta(P'\backslash L)&  \\
	&=&\Sigma_{P'\in\mathcal{P}}\Delta_1(P')\delta({P'\backslash L})& \mbox{$\Delta_1(P')=0$
		for $P'\notin\lceil\Delta_1\rceil$}\\
	&=&
	\Sigma_{P'\in\mathcal{P}}(\Sigma_{P''\in\mathcal{P}}w_1(P',P''))\delta({P'\backslash L})&\mbox{above b)}\\
	&=&
	\Sigma_{P'\in\mathcal{P},P''\in\mathcal{P}}w_1(P',P'')\delta({P'\backslash L})\\
	&=&
	\Sigma_{P' S P''}w_1(P',P'')\delta({P'\backslash L})&\mbox{$w_1(P',P'')=0$ for $(P',P'')\notin S $}\\
\end{array}
$$
Similarly, we can obtain
$\Theta'=\Sigma_{P' S P''}w_1(P',P'')\delta({P''\backslash L})$, thus $\Delta S^\dagger\Theta'$.

On the other hand, from 3 of Proposition~\ref{Prop-DefofLifting} and $\Delta_2 (\approx_b)^\dagger \Delta_3$,
there exists a function $w_2:\mathcal{P}\times\mathcal{P}\rightarrow [0,1]$ such that
\begin{enumerate}
	\item[a)] $Q'\approx_b P''$ for all $w_2(Q',P'')>0$,
	\item[b)] $\Delta_2(Q')=\Sigma_{P''\in\mathcal{P}}w_2(Q',P'')$ for all $Q'\in \mathcal{P}$, and
	\item[c)] $\Delta_3(P'')=\Sigma_{Q'\in\mathcal{P}}w_2(Q',P'')$ for all $P''\in \mathcal{P}$.
\end{enumerate}
Then we have the following calculations:
$$\begin{array}{lclr}
	\Theta&=&\Sigma_{Q'\in\lceil\Delta_2\rceil}\Delta_2(Q')\delta(Q'\backslash L)&  \\
	&=&\Sigma_{Q'\in\mathcal{P}}\Delta_2(Q')\delta({Q'\backslash L})& \mbox{$\Delta_2(Q')=0$
		for $Q'\notin\lceil\Delta_2\rceil$}\\
	&=&
	\Sigma_{Q'\in\mathcal{P}}(\Sigma_{P''\in\mathcal{P}}w_2(Q',P''))\delta({Q'\backslash L})&\mbox{above b)}\\
	&=&
	\Sigma_{Q'\in\mathcal{P},P''\in\mathcal{P}}w_2(Q',P'')\delta({Q'\backslash L})\\
	&=&
	\Sigma_{Q' \approx_b P''}w_2(Q',P'')\delta({Q'\backslash L})&\mbox{$w_2(Q',P'')=0$ for $Q'\not\approx_b P'' $}\\
\end{array}
$$
Similarly, we can obtain $\Theta'=\Sigma_{Q'\approx_b P''}w_2(Q',P'')\delta({P''\backslash L})$, thus $\Theta(\approx_b)^\dagger\Theta'$.
Hence $\Delta S^\dagger \Theta' (\approx_b)^\dagger \Theta$ by Proposition~\ref{Prolift}, that is $\Delta\, ( S\cdot\approx_b)^\dagger\Theta$, as required.\\
\textbf{6) $G\equiv G_1[f]$.} The proof of this follows the same lines as that of 5) above, and thus we omit the details.
\end{proof}
\section{Related Works}\label{sec8}

Probabilistic process algebras have been investigated for decades, giving rise to models such as Strictly-alternating systems~\cite{HanssonJ90}, Simple probabilistic automata~\cite{SegalaL94}, Concurrent labelled Markov chains~\cite{PhilippouLS00} and Probabilistic transition systems (PTS) ~\cite{Jonsson01,Deng18}.
Among them, models that combine nondeterminism with probability have become the mainstream.
The present study adopts probabilistic transition systems, which conservatively extend ordinary labelled transition systems and thus provide a compatible foundation for our work.

 
In classical CCS theory, congruence proofs, especially those concerning recursion, are already highly involved; in the probabilistic setting, they are regarded as notoriously difficult challenges~\cite{Fu21,GlabbeekGV19,GlabbeekGV25}. We extend the up-to technique to the probabilistic realm, establishing that 
branching equality is a congruence for the entire probabilistic CCS language, including parallel composition and recursion. 

Building on the strictly alternating model~\cite{HanssonJ90}, Andova and Willemse ~\cite{AndovaW06} introduced a state-based branching bisimulation relation.  In this model, states are classified as either nondeterministic or probabilistic, and transitions are either action or probabilistic. However, their branching bisimulation failed to be a congruence for parallel composition~\cite{TrckaG08}. Subsequently, Andova et al.~\cite{AndovaGT12} proposed a revised notion of branching bisimulation within the same model. They defined each branching bisimulation relation directly as an equivalence and proved that rooted branching bisimilarity is a congruence for the operators of $pACP_\tau$, an extension of the process algebra ACP\cite{Baeten90}. Notably, their approach and model differ significantly from those adopted in this paper. Requiring every branching bisimulation to be an equivalence is overly restrictive; we do not impose this condition to avoid the unnecessary overhead it imposes. 

Following Segala's probabilistic automata framework\cite{Segala01}, Glabbeek et al.~\cite{GlabbeekGV19}, with a recent revision in \cite{GlabbeekGV25}, proposed a mixed non-deterministic/probabilistic process algebra, whose syntax contains only 0 (inaction), action prefix, and non-deterministic choice; standard CCS operators such as parallel composition and restriction, as well as recursion, are absent. They defined a dedicated transition relation that connects probabilistic processes to probability distributions. 
Building on this model, they introduced the notions of strong probabilistic bisimilarity and branching probabilistic bisimilarity, both of which they prove to be equivalence relations. Due to the fact that their equivalence is basically defined on distributions which unavoidably using combined transition, their branching equivalence is
coarser than the corresponding equivalence that we presented in this paper.
Moreover, they showed that strong bisimilarity and rooted branching bisimilarity are congruences for the operators present in their language. A notable limitation is that their congruence results do not cover the parallel operator and recursion. 

It should also be noted that branching bisimilarity became popular shortly after~\cite{GlabbeekW89}, and branching bisimilarity has since been regarded by many researchers as an equivalence relation ``by nature''.
However, Basten~\cite{Basten96} pointed out that the composition of two branching bisimulation relations is not necessarily a branching bisimulation relation, so an equivalence proof for branching bisimilarity is not so trivial as one would hope.
He overcame this obstacle with an improved definition---semi-branching bisimulation---and a stuttering lemma, finally showing that branching bisimilarity is an equivalence relation.
Castiglioni and Tini~\cite{CastiglioniT20} were the first to lift this approach to the probabilistic setting: they introduced the probabilistic analogue of semi-branching bisimulation and Stuttering Lemma, proving that probabilistic branching bisimilarity is indeed an equivalence relation over divergence-free PTSs. Their definition of branching bisimulation
(semi-branching bisimulation) result in a coarser equivalence relation than the one presented in this paper. We have also being very careful in ensuring the the finall
relation that we define is an equivalence relation.

\section{Conlusion}\label{sec9}

In this paper we have extended the process language CCS with probability, and introduced a new notion of branching bisimilation for probabilistic processes. 
By forgetting probabilistic information in internal transitions and requiring matching of individual transitions, the
new notion captures more accurately the branching structure of probabilistic processes than existing notions of branching bisimulation which use combined
probabilistic transitions. We proved that the rooted version, branching equality, is a congruence on the entire extended language, while doing
so introduced an up to technique to deal with recursion.

\bibliographystyle{unsrtnat}
\bibliography{references}  






\end{document}